\numberwithin{equation}{section}
\newtheorem{theorem}{Theorem}[section]
\newtheorem{lemma}[theorem]{Lemma}
\newtheorem{corollary}[theorem]{Corollary}
\newtheorem{remark}[theorem]{Remark}
\newtheorem{proposition}[theorem]{Proposition}
\newtheorem{definition}[theorem]{Definition}
\newtheorem{example}[theorem]{Example}
\renewcommand{\d}{\,\mathrm{d}}
\newcommand{\R}{\mathbb{R}}
\newcommand{\F}{\mathcal{F}}
\newcommand{\N}{\mathbb{N}}
\newcommand{\Z}{\mathbb{Z}}
\newcommand{\E}{\mathbb{E}}
\newcommand{\1}{\mathbf{1}}
\renewcommand{\P}{\mathbb{P}}
\newcommand{\Hc}{\mathcal{H}}
\newcommand{\oP}{\overline{P}}
\newcommand{\oQ}{\overline{Q}}
\newcommand{\argmin}{\mathrm{argmin}}
\newcommand{\U}{\mathrm{U}}
\newcommand{\D}{\mathrm{D}}
\newcommand{\G}{\mathcal{G}}
\title{A superhedging approach to stochastic integration}
\author[\L ochowski]{Rafa\l{} M. {}\L ochowski}
\address{Rafa\l{} M. \L ochowski, Warsaw School of Economics, Poland, and African Institute for Mathematical Sciences, South Africa}
\email{rlocho314@gmail.com}
\author[Perkowski]{Nicolas Perkowski}
\address{Nicolas Perkowski, Humboldt-Universit\"at zu Berlin, Germany}
\email{perkowsk@math.hu-berlin.de}
\author[Pr\"omel]{David J. Pr\"omel}
\address{David J. Pr\"omel, University of Oxford, United Kingdom}
\email{proemel@maths.ox.ac.uk}
\date{\today}
\begin{document}

\begin{abstract}
  Using Vovk's outer measure, which corresponds to a minimal superhedging price, the existence of quadratic variation is shown for ``typical price paths'' in the space of c\`adl\`ag functions possessing a mild restriction on the jumps directed downwards. In particular, this result includes the existence of quadratic variation of ``typical price paths'' in the space of non-negative c\`adl\`ag paths and implies the existence of quadratic variation in the sense of F\"ollmer quasi surely under all martingale measures. Based on the robust existence of the quadratic variation, a model-free It\^o integration is developed.
\end{abstract}

\maketitle

\noindent\emph{Keywords:} c\`adl\`ag path, model-independent finance, quadratic variation, pathwise stochastic calculus, stochastic integration, Vovk's outer measure.\\
\emph{Mathematics Subject Classification (2010):} 60H05, 91G99.


\section{Introduction}
{\let\thefootnote\relax\footnotetext{The present work is a replacement of the technical report~\cite{Lochowski2015} and an unpublished work the second and third author.}}

In a recent series of papers~\cite{Vovk2009,Vovk2012,Vovk2015}, Vovk introduced a model-free, hedging-based approach to mathematical finance that uses arbitrage considerations to examine which path properties are satisfied by ``typical price paths''. For this purpose an outer measure~$\overline{P}$ is introduced, which corresponds to a superhedging price, and a statement~(A) is said to hold for ``typical price paths'' if (A) is true except on a null set under $\overline{P}$. We also refer to~\cite{Schafer2001} and~\cite{Takeuchi2009} for related settings. As a nice consequence all results proven for typical price paths hold quasi surely under all martingale measures and even more generally quasi surely under all semimartingale measures for which the coordinate process satisfies the classical condition of ``no arbitrage of the first kind'', also known as ``no unbounded profit with bounded risk''.  

The path properties of typical price paths belonging to the space of continuous functions are already rather well studied. Vovk proved that the path regularity of typical continuous price paths is similar to that of (continuous) semimartingales (see~\cite{Vovk2008}), i.e. non-constant typical continuous price paths have infinite $p$-variation for $p<2$ but finite $p$-variation for $p>2$, and they possess a quadratic variation (see~\cite{Vovk2012}). More advanced path properties such as the existence of an associated local time or an It\^o rough path were shown in~\cite{Perkowski2015,Perkowski2016}. All these results give a robust justification for taking additional path properties as an underlying assumption in model-independent finance or mathematical finance under Knightian uncertainty. 

However, while in financial modeling stochastic processes allowing for jumps play a central role, typical price paths belonging to the space $D([0,T],\R^d)$ of c\`adl\`ag functions are not well understood yet. This turns out to be a more delicate business because of two reasons in particular. First there exists no canonical extension of Vovk's outer measure~$\overline{P}$ to the whole space $D([0,T],\R^d)$ as discussed in Remark~\ref{rmk:outer measure} and thus we need to work with suitable subspaces. Second, the class of admissible strategies gets smaller if the class of possible price trajectories gets bigger.

\smallskip
The main aim of the present article is to develop ``model-free'' It\^o integration of adapted c\`agl\`ad (and even more general) integrands with respect to typical price paths. As the classical It\^o integral is one of the key stones of stochastic calculus and  mathematical finance in continuous time, the model-free It\^o integral constitutes a step towards a ``model-free'' stochastic calculus and is potentially a useful tool in model-independent finance.

For this purpose we consider the underlying space~$\Omega_{\psi}$ given by any subset of c\`adl\`ag functions with mildly restricted jumps directed downwards, that is every $\omega=(\omega^1,\dots,\omega^d) \in \Omega_\psi$ satisfies
\begin{equation*}
  \omega^i (t-) - \omega^i (t)\leq \psi \bigg(\sup_{s\in[0,t)}|\omega(s)|\bigg),\quad t\in (0,T],
\end{equation*}
where $\omega^i (t-):= \lim_{s\to t, \, s<t} \omega (s)$ for $i=1,\dots,d$ and $\psi\colon\R_+\to\R_+$ is a fixed non-decreasing function. We would like to remark that the sample space~$\Omega_\psi$ covers several sample spaces previously considered in the related literature such as the space of
continuous functions or the space of non-negative c\`adl\`ag functions, see Example~\ref{ex:sample spaces}. 

\smallskip
Our first contribution is to prove that the quadratic variation exists for typical price paths, and that it is given as the uniform limit of discrete versions of the quadratic variation. For the precise result we refer to Theorem~\ref{thm:quadratic variation} and Corollary~\ref{cor:quadratic variation}. Intuitively, this means that it is possible to get infinitely rich by investing in those paths where the convergence of the discrete quadratic variations fails. Let us emphasize once again that this, in particular, justifies the assumption of the existence of the quadratic variation in model-free financial mathematics. In the case of continuous paths or c\`adl\`ag paths with restricted jumps (in all directions), the existence of the quadratic variation for typical price paths has been shown by Vovk in~\cite{Vovk2012} and~\cite{Vovk2015}, respectively. However, the case of, for example, non-negative c\`adl\`ag paths stayed open since here no short-selling is allowed and therefore the arguments used in~\cite{Vovk2012, Vovk2015} break down.

The existence of the quadratic variation is not only a crucial ingredient to develop a model-free It\^o integration, it is also a powerful assumption by itself in robust and model-independent financial mathematics since it, in particular, allows to use F\"ollmer's pathwise It\^o integral~\cite{Follmer1979} and thus to construct $\int f^\prime (\omega(s))\d \omega(s)$ for $f\in C^{2}$ or more generally for path-dependent functionals in the sense of Dupire~\cite{Dupire2009} as shown by Cont and Fourni\'e~\cite{Cont2010a}. The F\"ollmer integration has a long history of successful applications in mathematical finance going back at least to Bick and Willinger~\cite{Bick1994} and Lyons~\cite{Lyons1995}. Moreover, in many recent works in robust financial mathematics the existence of quadratic variation serves as a basic assumption on the underlying price paths, see for instance~\cite{Davis2014,Schied2016,Riga2016,Cuchiero2016,Beiglbock2017}.

\smallskip
Our second contribution is the development of a model-free It\^o integration theory for adapted c\`agl\`ad integrands with respect to typical price paths, see Theorem~\ref{thm:integral}. Compared to the rich list of classical pathwise constructions of stochastic integrals, e.g. \cite{Bichteler1981}, \cite{Willinger1988}, \cite{Willinger1989}, \cite{Karandikar1995}, \cite{Nutz2012}, the presented construction complements the previous works in three aspects: 
\begin{enumerate}
  \item It works without any tools from probability theory at all and is entirely based on pathwise superhedging arguments.
  \item The non-existence of the It\^o integral comes with a natural arbitrage interpretation, that is, it is possible to achieve a pathwise arbitrage opportunity of the first kind if the It\^o integral does not exist.
  \item The model-free It\^o integral possesses continuity estimates (in a topology induced by Vovk's outer measure~$\overline{P}$).
\end{enumerate}

Let us emphasize that the continuity of model-free integral is one of most important aspects since already in the classical probabilistic setting most applications of the It\^o integral (SDEs, stochastic optimization, duality theory,...) are based on the fact that it is a continuous operator. Furthermore, we would like to point out that Vovk~\cite{Vovk2016} very recently introduces a related approach to define a model-free It\^o integral, which satisfies also  (1) and (2) but provides no continuity estimates. We comment on the difference between our construction and Vovk's  work in more detail in Remark \ref{q_var_part_ind}.

\smallskip
The construction of our model-free It\^o integral  crucially builds on the existence of the quadratic variation for typical price paths and on continuity estimates for pathwise integrals of step functions. We distinguish between the general sample space $\Omega_\psi$ and its restriction to the space of continuous paths since the latter  leads to significantly better continuity estimates.

In the case of continuous paths the continuity estimate for the pathwise It\^o integral of a step function, see Lemma~\ref{lem:model free ito cont}, relies on Vovk's pathwise Hoeffding inequality (\cite[Theorem~A.1]{Vovk2012}). In this context, we recover essentially the results of~\cite[Theorem~3.5]{Perkowski2016}. However, in~\cite{Perkowski2016} we worked with the uniform topology on the space of integrands while here we are able to strengthen our results and to replace the uniform distance with a more natural distance that depends only on the integral of the squared integrand against the quadratic variation. 

In the general case of c\`adl\`ag paths the continuity estimates for the integral of step functions, see Lemma~\ref{lem: ito inequality cadlag}, require completely different techniques compared to the continuous case. In particular, while one has a very precise control of the fluctuations in the case of continuous price paths, this is not possible anymore in the presence of jumps as c\`adl\`ag  paths could have a ``big'' jump at any time. Hence, Vovk's pathwise Hoeffding inequality needs to be replaced by a pathwise version of the Burkholder-Davis-Gundy inequality due to Beiglb\"ock and Siorpaes~\cite{Beiglbock2015}.

\subsection*{Organization of the paper}
Section~\ref{sec:vovk} introduces Vovk's model-free and hedging-based approach to mathematical finance. In Section~\ref{sec:quadratic variation} the existence of quadratic variation for typical (c\`adl\`ag) price paths is shown. The model-free It\^o integration is developed in Section~\ref{sec:ito integration}. Appendices~\ref{sec:appendix} and~\ref{app:stopping} collect some auxiliary results concerning Vovk's outer measure and stopping times. 

\subsection*{Acknowledgment}
The research of R.M.\L.\ was partially founded by the National Science Centre, Poland, under Grant No.~$2016/21/$B/ST$1/01489$ and the African Institute for Mathematical Sciences, South Africa. N.P. and D.J.P. are grateful for the excellent hospitality of the Hausdorff Research Institute for Mathematics (HIM), where the work was partly initiated. D.J.P.
was employed at ETH Z\"urich when the main part of the work was elaborated and gratefully acknowledges the financial support of the Swiss National Foundation under Grant No.~$200021\_163014$.

\section{Superhedging and typical price paths}\label{sec:vovk}

Vovk's model-free and hedging-based approach to mathematical finance allows for determining sample path properties of ``typical price paths''. For this purpose he introduces a notion of outer measure which is based on purely pathwise arbitrage considerations, see for example~\cite{Vovk2012}. Following a slightly modified framework as introduced in~\cite{Perkowski2015,Perkowski2016}, we briefly set up the notation and definitions.\smallskip

For a positive integer~$d$ and a finite time horizon $T\in (0,\infty)$ let $D([0,T],\R^d)$ be the space of all c\`adl\`ag functions $\omega\colon [0,T]\to\R^d$. For $t \in (0, T]$ let us define $\omega^i (t-):= \lim_{s\to t, \, s<t} \omega^i (s)$ and by $\R_+$ let us denote the interval $[0, \infty)$. For a fixed non-decreasing function $\psi\colon\R_+\to\R_+$ and a fixed set $\Omega \subseteq D([0,T],\R^d)$ we consider the sample space~$\Omega_\psi$ of possible price paths given by 
\begin{equation*}
  \Omega_\psi := \left \{ \omega=(\omega^1,\dots, \omega^d) \in \Omega \,:\, \omega^i(t-)-\omega^i (t)\leq \psi\bigg(\sup_{s\in[0,t)}|\omega(s)|\bigg)\,\forall t\in (0,T],\,i=1,\dots, d \right \}.
\end{equation*}
Notice that the function~$\psi$ in the definition of $\Omega_\psi$ gives a predictable bound for the allowed jump size for jumps directed downwards (for all price paths belonging to~$\Omega_\psi$). However, the jumps directed upwards are not necessarily restricted in any way. This general type of underlying sample spaces provides a unifying setting for many examples of previously treated sample spaces in the related literature. 

\begin{example}\label{ex:sample spaces}
  The following sample spaces are examples of~$\Omega_\psi$:
  \begin{enumerate}
    \item $\Omega_c:=C([0,T],\R^d)$, the space of all continuous functions $\omega\colon [0,T]\to\R^d$, 
    \item $\Omega_+ := D([0,T], \R^d_+)$, the space of all non-negative c\`adl\`ag functions $\omega\colon [0,T]\to\R^d_+$, 
    \item $\tilde \Omega_{\psi}$ which is defined as the subset of all c\`adl\`ag functions $\omega\colon [0,T]\to\R^d$ such that
          \begin{equation*}
            |\omega (t)-\omega(t-)|\leq \psi\bigg(\sup_{s\in[0,t)}|\omega(s)|\bigg),\quad t\in (0,T],
          \end{equation*}
          and $\psi\colon\R_+\to (0,\infty)$ is a fixed non-decreasing function.
   \end{enumerate}
   A detailed discussion about the financial interpretation of the last space can be found in~\cite{Vovk2015} and a generalization of this space allowing for different bounds for jumps directed upwards resp. downwards was recently introduced in~\cite{Vovk2016}.
\end{example}

The coordinate process on $\Omega_\psi$ is denoted by $S$, i.e. $S_{t}(\omega):=\omega(t)$ for $\omega \in \Omega_\psi$ and $t\in [0,T]$. For each $t\in [0,T]$, ${\mathcal{F}}_{t}^{\circ}$ is defined to be the $\sigma$-algebra on~$\Omega_\psi$ that is generated by $(S_s: s \in [0,t])$ and ${\mathcal{F}}_{t}$ is the universal completion of ${\mathcal{F}}_{t}^{\circ}$.\footnote{The reason for working with the universal completion is that this provides us with many useful stopping times, see Appendix~\ref{app:stopping} for details.} An event is an element of the $\sigma$-algebra ${\mathcal{F}}_{T}.$ Stopping times $\tau\colon\Omega_\psi \to [0,T]\cup \{ \infty \} $ with respect to the filtration $({\mathcal{F}}_{t})_{t\in[0,T]}$ and the corresponding $\sigma$-algebras ${\mathcal{F}}_{\tau}$ are defined as usual. The indicator function of a set $A$, for $A\subseteq \R^d$ or $A\subseteq \Omega_\psi$, is denoted by $\1_A$ and for two real vectors $x,y\in \R^d$ we write $xy=x\cdot y$ for the inner product on $\R^d$, and $\lvert\cdot\rvert$ always denotes the $\ell^2$-norm on $\R^d$. For $\omega\in D([0,T],\R^d)$ the supremum norm is given by $\|\omega\|_{\infty}:= \sup_{t\in [0,T]} |\omega(t)|$. Furthermore, we use the notation $s\wedge t:= \min \{s,t\}$, $s\vee t:= \max\{s,t\}$ for $s,t\in\R_+$, $\N:=\{1,2,\dots\}$ for the set of positive integers, $\N_0 := \N \cup \{0\}$ and $\Z$ for set of all integers. \smallskip

A process $H\colon \Omega_\psi\times [0,T]\to\R^d$ is a \emph{simple (trading) strategy} if there exist a sequence of stopping times $0 = \tau_0 < \tau_1 <  \tau_2 < \dots$, and $\F_{\tau_n}$-measurable bounded functions $h_n\colon \Omega_\psi \to \R^d$, such that for every $\omega\in\Omega_\psi$, $\tau_{n}(\omega)=\tau_{n+1}(\omega)=\ldots\in [0,\infty]$ from some $n=n(\omega)\in \N$ on, and such that 
\begin{equation*}
  H_t(\omega) = \sum_{n=0}^\infty h_n(\omega) \1_{(\tau_n(\omega),\tau_{n+1}(\omega)]}(t),\quad t \in [0,T].
\end{equation*}
Therefore, for a simple strategy $H$ the corresponding integral process
\begin{equation*}
  (H \cdot S)_t(\omega) := \sum_{n=0}^\infty h_n(\omega) \cdot (S_{\tau_{n+1} \wedge t}(\omega) - S_{\tau_n\wedge t}(\omega)) = \sum_{n=0}^\infty h_n(\omega) S_{\tau_n\wedge t, \tau_{n+1} \wedge t}(\omega) 
\end{equation*}
is well-defined for all $\omega \in \Omega_\psi$ and all $t \in [0,T]$; here we introduced the notation $S_{u,v}:= S_v - S_u$ for $u,v \in [0,T]$.

For $\lambda > 0$ a simple strategy $H$ is called \emph{(strongly) $\lambda$-admissible} if $(H\cdot S)_t(\omega) \ge - \lambda $ for all $(t,\omega) \in [0,T]\times \Omega_\psi$. The set of strongly $\lambda$-admissible simple strategies is denoted by $\mathcal{H}_{\lambda}$.\smallskip

In the next definition we introduce an outer measure $\oP$, which is very similar to the one used by Vovk~\cite{Vovk2012}, but not quite the same. We refer to \cite[Section~2.3]{Perkowski2016} for a detailed discussion of the relation between our slightly modified outer measure and the original one due to Vovk.

\begin{definition}\label{def:vovk}
  \emph{Vovk's outer measure} $\overline{P}$ of a set $A \subseteq \Omega_\psi$ is defined as the minimal superhedging price for $\1_A$, that is
  \begin{equation*}
    \overline{P}(A) := \inf\Big\{\lambda > 0\,: \,\exists (H^n)_{n\in \N} \subset \mathcal{H}_{\lambda} \text{ s.t. } \forall \omega \in \Omega_\psi\,: \, \liminf_{n \to\infty} (\lambda + (H^n\cdot S)_T(\omega)) \ge \1_A(\omega)\ \Big\}.
  \end{equation*}
  A set $A \subseteq \Omega_\psi$ is called a \emph{null set} if it has outer measure zero. A property (P) holds for \emph{typical price paths} if the set $A$ where (P) is violated is a null set.
\end{definition}  

Indeed, as in \cite[Lemma~4.1]{Vovk2012} or \cite[Lemma~2.3]{Perkowski2016}, it is straightforward to verify that~$\overline{P}$ fulfills all properties of an outer measure. 

\begin{lemma}
  $\overline{P}$ is an outer measure, i.e. a non-negative set function defined on the subsets of $\Omega_\psi$ such that $\overline{P}(\emptyset) = 0$, $\overline{P}(A) \leq \overline{P}(B)$ for $A \subseteq B \subseteq \Omega_\psi$, and $\overline{P}(\cup_n A_n) \leq \sum_n \overline{P}(A_n)$ for every sequence of subsets $(A_n)_{n\in\N}\subseteq \Omega_\psi$.
\end{lemma}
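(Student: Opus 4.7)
The plan is to verify the three defining properties in turn. Non-negativity is immediate from the form of the definition, since the infimum is taken over $\lambda > 0$. For the empty-set axiom, the zero strategy $H^n \equiv 0$ is in $\mathcal{H}_\lambda$ for any $\lambda > 0$ and satisfies $\lambda + (H^n\cdot S)_T \equiv \lambda > 0 = \1_\emptyset$, so $\overline{P}(\emptyset) \le \lambda$ for every $\lambda > 0$, hence $\overline{P}(\emptyset) = 0$. Monotonicity is equally routine: if $A \subseteq B$ then $\1_A \le \1_B$, and any admissible sequence that superhedges $\1_B$ automatically superhedges $\1_A$, giving $\overline{P}(A) \le \overline{P}(B)$.

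The substance is countable subadditivity, which I would obtain by a standard diagonal-combination argument. Assume $\sum_n \overline{P}(A_n) < \infty$, for otherwise the inequality is trivial, and fix $\epsilon > 0$. For each $n$, choose $\lambda_n \in (\overline{P}(A_n), \overline{P}(A_n) + \epsilon 2^{-n}]$ and a sequence $(H^{n,k})_{k \in \N} \subset \mathcal{H}_{\lambda_n}$ witnessing the superhedging of $\1_{A_n}$, i.e.\ $\liminf_k (\lambda_n + (H^{n,k}\cdot S)_T(\omega)) \ge \1_{A_n}(\omega)$ for every $\omega$. Set $\Lambda := \sum_n \lambda_n \le \sum_n \overline{P}(A_n) + \epsilon$ and define the truncated diagonal sums $G^m := \sum_{n=1}^m H^{n,m}$. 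A finite sum of simple strategies is again a simple strategy, obtained by merging the underlying stopping-time sequences into their common refinement, and by $\lambda_n$-admissibility each partial sum $\sum_{n=1}^m (\lambda_n + (H^{n,m}\cdot S)_t)$ is non-negative, so $G^m \in \mathcal{H}_\Lambda$.

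It remains to verify that $\liminf_m (\Lambda + (G^m\cdot S)_T) \ge \1_{\bigcup_n A_n}$ pointwise. Fix $\omega$. If $\omega \notin \bigcup_n A_n$ the right-hand side vanishes and admissibility gives the bound for free. If $\omega \in A_{n_0}$, the defining liminf for $(H^{n_0,k})_k$ yields, for any $\delta > 0$, a threshold $k_0$ such that $\lambda_{n_0} + (H^{n_0,k}\cdot S)_T(\omega) \ge 1 - \delta$ for all $k \ge k_0$. Dropping the remaining non-negative summands and the positive tail $\sum_{n > m} \lambda_n$, I obtain $\Lambda + (G^m\cdot S)_T(\omega) \ge \lambda_{n_0} + (H^{n_0,m}\cdot S)_T(\omega) \ge 1 - \delta$ for every $m \ge \max(n_0, k_0)$. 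Letting $\delta \downarrow 0$ gives the desired lower bound, hence $\overline{P}(\bigcup_n A_n) \le \Lambda \le \sum_n \overline{P}(A_n) + \epsilon$, and $\epsilon \downarrow 0$ concludes.

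The main obstacle, and the only real content, is the diagonal passage from the doubly-indexed family $(H^{n,k})_{n,k}$ to a single sequence $(G^m)_m$: one has to convert pointwise liminfs (one for each $n$) into a single liminf along $m$ while keeping the total admissibility constant bounded by $\sum_n \overline{P}(A_n) + \epsilon$. The trick that makes this work is that admissibility forces every summand $\lambda_n + (H^{n,m}\cdot S)_T(\omega)$ to be non-negative, so truncating the sum and discarding all but the single relevant index $n_0$ costs nothing. A minor verification, that merging finitely many stopping-time sequences produces again a valid simple strategy, is routine.
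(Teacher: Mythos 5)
Your proof is correct and is essentially the argument the paper defers to (via \cite[Lemma~4.1]{Vovk2012} and \cite[Lemma~2.3]{Perkowski2016}): the first two properties are immediate, and countable subadditivity follows from the diagonal combination $G^m=\sum_{n=1}^m H^{n,m}$, whose admissibility is exactly what lets you discard all summands except the one for $n_0$. The same diagonal device appears verbatim in the paper's proof of Proposition~\ref{prop:na1 interpretation}, so no further comment is needed.
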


\begin{remark}\label{rmk:outer measure}
  The freedom of choosing the set~$\Omega$ in the definition of $\Omega_\psi$ is very much in the spirit of so-called ``sets of beliefs'' or ``prediction sets'' recently introduced by Hou and Ob{\l}{\'o}j~\cite{Hou2015} in the context of model-free mathematical finance. Their idea is to allow for choosing an a priori prediction set~$\tilde \Omega \subseteq D([0,T],\R^d)$ and to require ``superhedging'' only on~$\tilde \Omega$, which has, in particular, the desirable effect to lead to lower minimal pathwise superhedging prices.

  The same effect can be observed for Vovk's outer measure~$\overline{P}$ as it corresponds to a minimal superhedging price on $\Omega_\psi$, cf. Definition~\ref{def:vovk}. For example, the a priori restrictions of the general sample space $\Omega_\psi$ to the specific choice $\Omega_c$ lead to stronger estimates of model-free It\^o integrals, cf. Subsection~\ref{subsec:integration continuous} and~\ref{subsec:integration cadlag}. However, when showing results for typical price paths, one wants to take the underlying sample space as big as possible since any result (for typical price paths) immediately transfers to any smaller ``prediction set''.

  With this observation in mind, it might be desirable to consider even the more general sample space $\Omega_\psi=D([0,T],\R^d)$ of all c\`adl\`ag functions as possible price paths. Unfortunately, there seems to be (currently) no canonical extension of Vovk's outer measure to $\Omega_\psi=D([0,T],\R^d)$ because the condition of strong $\lambda$-admissibility would only allow for the trivial trading strategy $H \equiv 0$. Therefore, it is necessary to include at least some predictable bound~$\psi$ on the jumps directed downwards.
\end{remark}

One of the main reasons why Vovk's outer measure~$\overline{P}$ is of such great interest in model-independent financial mathematics, is that it dominates simultaneously all local martingale measures on the sample space~$\Omega_\psi$ (cf. \cite[Lemma~6.3]{Vovk2012} and \cite[Proposition~2.6]{Perkowski2016}). In other words, a null set under~$\overline{P}$ turns out to be a null set simultaneously under all local martingale measures on~$\Omega_\psi$.

\begin{proposition}\label{prop:local martingale}
  Let $\P$ be a probability measure on $(\Omega_\psi, \F)$, such that the coordinate process~$S$ is a $\P$-local martingale, and let $A \in \F_T$. Then $\P(A) \leq \overline{P}(A)$.
\end{proposition}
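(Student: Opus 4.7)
The plan is to convert any strongly $\lambda$-admissible superhedger for $\1_A$ into a non-negative $\P$-supermartingale and then apply Fatou's lemma twice. Fix $A \in \F_T$ and $\lambda > \oP(A)$. By Definition~\ref{def:vovk} there is a sequence $(H^n)_{n \in \N} \subset \Hc_\lambda$ with
\[
\liminf_{n \to \infty} \bigl(\lambda + (H^n \cdot S)_T(\omega)\bigr) \ge \1_A(\omega) \quad \text{for every } \omega \in \Omega_\psi.
\]
Define $M^n_t := \lambda + (H^n \cdot S)_t$ and aim to show $\E_\P[M^n_T] \le \lambda$, so that the double Fatou
\[
\P(A) = \E_\P[\1_A] \le \E_\P\Bigl[\liminf_{n \to \infty} M^n_T\Bigr] \le \liminf_{n \to \infty} \E_\P[M^n_T] \le \lambda
\]
yields $\P(A) \le \lambda$; sending $\lambda \downarrow \oP(A)$ then finishes the proof.

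The first step is to verify that each $M^n$ is a non-negative $\P$-supermartingale started at $\lambda$. Because $H^n$ is simple, $(H^n \cdot S)$ is a telescoping sum $\sum_{k=0}^\infty h_k \cdot (S_{\tau_{k+1} \wedge t} - S_{\tau_k \wedge t})$ which eventually stabilises pathwise, with $h_k$ bounded and $\F_{\tau_k}$-measurable and $\tau_k$ an $(\F_t)$-stopping time. Since $S$ is a $\P$-local martingale, each stopped increment $(S_{\tau_{k+1} \wedge \cdot} - S_{\tau_k \wedge \cdot})$ is a $\P$-local martingale by optional stopping, and multiplying by the bounded $\F_{\tau_k}$-measurable factor $h_k$ preserves the local martingale property; summing finitely many such terms shows $(H^n \cdot S)$, and hence $M^n$, is a $\P$-local martingale. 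Strong $\lambda$-admissibility gives $M^n \ge 0$ pointwise, and Fatou's lemma applied along a localising sequence turns a non-negative local martingale into a supermartingale, so $\E_\P[M^n_T] \le M^n_0 = \lambda$, as required.

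The only subtle point is the compatibility of filtrations: the stopping times $\tau_k$ defining $H^n$ live on the universally completed natural filtration $(\F_t)$ of the coordinate process, while $S$ is a $\P$-local martingale with respect to its $\P$-augmentation. Since $(\F_t)$ is contained in any $\P$-augmented filtration containing the natural one, the $\tau_k$ remain stopping times in the augmented filtration and the local martingale property of $(H^n \cdot S)$ under $\P$ is unaffected; beyond this check the argument is a routine Fatou estimate with no further obstacles.
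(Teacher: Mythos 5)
Your proof is correct and follows essentially the same route as the paper: both show that $\lambda + (H^n\cdot S)$ is a non-negative $\P$-local martingale, hence a supermartingale with $\E_\P[\lambda + (H^n\cdot S)_T]\le\lambda$, and conclude via Fatou's lemma. The extra detail you supply on the local martingale property of the simple integral and on the filtration compatibility is a sound elaboration of what the paper leaves implicit.
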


A second reason is that sets with outer measure zero come with a natural arbitrage interpretation from classical mathematical finance. Roughly speaking, a null set leads to a pathwise arbitrage opportunity of the first kind (NA1), see \cite[p.~564]{Vovk2012} and \cite[Lemma~2.4]{Perkowski2016}.

\begin{proposition}\label{prop:na1 interpretation}
  A set $A \subseteq \Omega_\psi$ is a null set if and only if there exists a constant $K\in (0,\infty)$ and a sequence of strongly $K$-admissible simple strategies $(H^n)_{n\in \N} \subset \mathcal{H}_K$ such that
  \begin{equation}\label{eq:NA1 with simple strategies}
    \liminf_{n \to \infty} (K + (H^n \cdot S)_T(\omega )) \ge \infty \cdot \1_A(\omega),\quad \omega \in \Omega_\psi,
  \end{equation}
  with the convention $ \infty\cdot 0 := 0$ and $ \infty \cdot 1 := \infty$. In that case we can take $K > 0$ arbitrarily small.
\end{proposition}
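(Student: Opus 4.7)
The two directions behave quite differently, so I would handle them separately and note at the start that the final ``arbitrarily small $K$'' clause follows from linearity of the integral under scaling: given any admissible sequence $(H^n)$ witnessing the conclusion with constant $K$, the rescaled strategies $(K'/K)H^n$ lie in $\mathcal{H}_{K'}$ and still satisfy the same liminf inequality (since multiplying a non-negative limit inferior by a positive constant preserves the value $\infty$), so we may replace $K$ by any $K'>0$.

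For the easy direction ($\Leftarrow$), assume such $K$ and $(H^n)$ exist. By the scaling remark above, for any $\varepsilon>0$ the rescaled strategies $\tilde H^n := (\varepsilon/K)H^n$ lie in $\mathcal{H}_\varepsilon$ and satisfy
\begin{equation*}
  \liminf_{n\to\infty}\bigl(\varepsilon + (\tilde H^n\cdot S)_T(\omega)\bigr) \ge \infty\cdot\1_A(\omega) \ge \1_A(\omega),\quad \omega\in\Omega_\psi,
\end{equation*}
so $\overline{P}(A)\le\varepsilon$. Since $\varepsilon$ is arbitrary, $\overline{P}(A)=0$.

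For the harder direction ($\Rightarrow$), fix any $\delta>0$ (which will play the role of $K$). The plan is to superhedge $\1_A$ from initial capital $\delta 2^{-k}$ for each $k\in\N$, and then add the resulting strategies. Since $\overline{P}(A)=0<\delta 2^{-k}$, Definition~\ref{def:vovk} provides a sequence $(H^{k,n})_{n\in\N}\subset\mathcal{H}_{\delta 2^{-k}}$ with
\begin{equation*}
  \liminf_{n\to\infty}\bigl(\delta 2^{-k}+(H^{k,n}\cdot S)_T(\omega)\bigr) \ge \1_A(\omega),\quad \omega\in\Omega_\psi.
\end{equation*}
Choose an arbitrary sequence $N(n)\to\infty$ in $\N$ and put $\tilde H^n := \sum_{k=1}^{N(n)} H^{k,n}$. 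A finite sum of simple strategies is again simple (one merges the underlying stopping-time grids and the corresponding $\F_{\tau}$-measurable coefficients), and since each summand is $\delta 2^{-k}$-admissible, $\tilde H^n\in\mathcal{H}_\delta$ uniformly in $n$.

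The crux is to show that $\liminf_n\bigl(\delta+(\tilde H^n\cdot S)_T(\omega)\bigr)=\infty$ for $\omega\in A$, and this is where Fatou enters. Set $a_{k,n}(\omega) := \1_{\{k\le N(n)\}}\bigl(\delta 2^{-k}+(H^{k,n}\cdot S)_T(\omega)\bigr)\ge 0$. Because $N(n)\to\infty$, for each fixed $k$ we have $\liminf_n a_{k,n}(\omega)=\liminf_n\bigl(\delta 2^{-k}+(H^{k,n}\cdot S)_T(\omega)\bigr)\ge\1_A(\omega)$. Applying Fatou's lemma for counting measure in $k$,
\begin{equation*}
  \liminf_{n\to\infty}\sum_{k=1}^\infty a_{k,n}(\omega) \ge \sum_{k=1}^\infty \liminf_{n\to\infty} a_{k,n}(\omega) \ge \sum_{k=1}^\infty \1_A(\omega) = \infty\cdot\1_A(\omega).
\end{equation*}
Since the left-hand side equals $\liminf_n\bigl(\delta\sum_{k=1}^{N(n)}2^{-k}+(\tilde H^n\cdot S)_T(\omega)\bigr)$ and $\sum_{k=1}^{N(n)}2^{-k}\le 1$, we deduce $\liminf_n(\delta+(\tilde H^n\cdot S)_T(\omega))\ge\infty\cdot\1_A(\omega)$ on $A$; on $A^c$ the inequality is trivial by admissibility. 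As $\delta>0$ was arbitrary, this also recovers the final assertion that $K$ may be chosen arbitrarily small. The main subtlety I expect is merely bookkeeping: verifying that simple-strategy additivity is preserved under the finite sums and that the Fatou step is legitimate, both of which follow because each summand $\delta 2^{-k}+(H^{k,n}\cdot S)_T$ is non-negative by admissibility.
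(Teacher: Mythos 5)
Your proof is correct and follows essentially the same route as the paper: for the forward direction both arguments superhedge $\1_A$ from initial capital $K2^{-k}$ for each $k$, form the diagonal finite sums $\sum_{k\le N(n)} H^{k,n}$, and conclude via superadditivity of the limit inferior over the non-negative summands (your Fatou-for-counting-measure step is exactly the paper's truncate-at-$k$-and-let-$k\to\infty$ argument), while the converse is the same rescaling observation. No gaps.
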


Since the proofs of Proposition~\ref{prop:local martingale} and Proposition~\ref{prop:na1 interpretation} work exactly as in the case of continuous price paths, they are postponed to  Appendix~\ref{sec:appendix}.

\subsection{Auxiliary set function}

In order to prove the existence of the quadratic variation for typical price paths belonging to the sample space $\Omega_\psi$, we need to introduce a relaxed version of admissibility.\smallskip

A simple strategy $H$ is called \emph{weakly $\lambda$-admissible} if for all $(t,\omega)\in [0,T]\times\Omega_\psi$
\begin{equation*}
  (H\cdot S)_t(\omega) \geq - \lambda\big(1+ | S_{\rho_\lambda(H)}(\omega)|\1_{[\rho_\lambda(H)(\omega), T]}(t)\big),
\end{equation*}
where                                                                                                                                                                                                                                                                                                                                                                                                                                                                                                                                                                                                                                                                                                                                                                          
\begin{equation*}
  \rho_\lambda(H)(\omega):=\inf\big\{t \in [0,T]\,:\, (H\cdot S)_t(\omega) \leq -\lambda\big\}
\end{equation*}
and 
\begin{equation*}
  H_t(\omega) = H_t(\omega) \1_{[0,\rho_\lambda(H)(\omega)\wedge T]}(t).
\end{equation*}
In all definitions we apply the conventions $\inf \emptyset := \infty$, $[\infty, T]:=\emptyset$ and $S_\infty (\omega):=0$. By Lemma~\ref{lem:stopping} in Appendix \ref{app:stopping} $\rho_\lambda(H)$ is a stopping time. We write $\G_\lambda$ for the set of weakly $\lambda$-admissible strategies. Expressed verbally, this means that weakly $\lambda$-admissible strategies must give a payoff larger than $-\lambda$ at all times, except that they can lose all their previous gains plus $\lambda (1 + |S_t |)$ through one large jump; however, in that case they must instantly stop trading and may not try to bounce up again. \smallskip

Based on the notion of weak admissibility, we define an auxiliary set function $\overline{Q}$ via the minimal superhedging price for a new class of trading strategies.

\begin{definition}\label{def:set function}
  The set function $\overline{Q}$ is given by
  \begin{align*}
    \overline{Q}(A) := \inf\Big\{\lambda > 0\,:&\,\exists (H^n)_{n\in \N} \subset \mathcal{G}_{\lambda} \text{ s.t. }\forall\omega \in \Omega_\psi\,: \\
    &\liminf_{n \to \infty} \big(\lambda + (H^n\cdot S)_T(\omega) + \lambda \mathbbm{1}\cdot S_{\rho_\lambda(H^n)}(\omega) \1_{\{\rho_\lambda(H^n)<\infty\}}(\omega)\big) \ge \1_A(\omega) \,  \Big\},
  \end{align*}
  where $\mathbbm{1}:=(1,\dots,1)\in\R^d$. 
\end{definition}

\begin{remark}
  Note that $\overline{Q}$ is not an outer measure since it fails to be even finitely subadditive. However, it will be a useful tool to show the existence of the quadratic variation for typical price paths in~$\Omega_\psi$ and for the construction of model-free It\^o integration. For events containing only uniformly bounded paths, the following lemma (Lemma~\ref{lem:outer measure equivalence}) shows that it is \emph{nearly} subadditive:
  \begin{equation*}
    \oQ(\cup_n A_n) \leq \oP(\cup_n A_n) \leq \sum_{n=1}^\infty \oP(A_n) \leq (1+ 3 d K +2 d \psi (K)) \sum_{n=1}^\infty \oQ(A_n),
  \end{equation*}
  for every sequence $(A_n)_{n\in \N}$ such that $(A_n)_{n\in \N}\subseteq \left \{ \omega\in \Omega_\psi \,:\, \| \omega\|_\infty \leq K\right \}$ for some constant $K\in \R_+$. Hence, in the particular case where $\oQ(A_n) = 0$ for all $n \in \N$ we still get the countable subadditivity.
\end{remark}

\begin{lemma}\label{lem:outer measure equivalence}
  If $A \subseteq \Omega_{\psi,K}:= \left \{\omega \in \Omega_\psi \,:\,  \|\omega\|_\infty \leq K \right \}$ for $K\in \R_+$, then
  \begin{equation*}
    \overline{Q}(A)\leq \oP(A) \leq (1+ 3 d K +2 d \psi (K))\oQ(A).
  \end{equation*}
\end{lemma}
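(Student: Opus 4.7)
The first inequality $\oQ(A)\le\oP(A)$ follows immediately from the inclusion $\mathcal{H}_\lambda\subseteq\G_\lambda$: every strongly $\lambda$-admissible strategy $H$ satisfies $\rho_\lambda(H)=\infty$ identically, so the extra summand $\lambda\mathbbm{1}\cdot S_{\rho_\lambda(H)}\1_{\{\rho_\lambda(H)<\infty\}}$ appearing in Definition~\ref{def:set function} vanishes, and every superhedging sequence for $\oP(A)$ is automatically a competitor for $\oQ(A)$.

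For the second inequality, the plan is to fix $\lambda>\oQ(A)$ and an associated sequence $(H^n)_{n\in\N}\subseteq\G_\lambda$, and to convert each $H^n$ into a strongly $c\lambda$-admissible strategy by two modifications: first, stop all trading as soon as $|S|$ exceeds $K$; second, add a buy-and-hold position $\lambda\mathbbm{1}$ that absorbs the penalty term $\lambda\mathbbm{1}\cdot S_{\rho_\lambda(H^n)}$ in $\oQ$. Concretely, I introduce the stopping time $\sigma:=\inf\{t\in[0,T]:|S_t|>K\}$ (available by Lemma~\ref{lem:stopping}), which equals $\infty$ on $\Omega_{\psi,K}\supseteq A$, and I set
\begin{equation*}
\tilde H^n_t:=H^n_t\1_{[0,\sigma]}(t)+\lambda\mathbbm{1}\,\1_{(0,\sigma\wedge\rho_\lambda(H^n)]}(t),
\end{equation*}
which is again a simple strategy. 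On $\Omega_{\psi,K}$ one has $\sigma=\infty$, so $(\tilde H^n\cdot S)_T=(H^n\cdot S)_T+\lambda\mathbbm{1}\cdot(S_{\rho_\lambda(H^n)\wedge T}-S_0)$, and using $|\mathbbm{1}\cdot S_0|\le dK$ a short case distinction ($\rho_\lambda(H^n)\le T$ versus $\rho_\lambda(H^n)=\infty$) yields the pointwise bound
\begin{equation*}
c\lambda+(\tilde H^n\cdot S)_T\ \ge\ \lambda+(H^n\cdot S)_T+\lambda\mathbbm{1}\cdot S_{\rho_\lambda(H^n)}\1_{\{\rho_\lambda(H^n)<\infty\}}.
\end{equation*}
Taking $\liminf_n$ dominates $\1_A$ on $\Omega_{\psi,K}$; off $\Omega_{\psi,K}$ the indicator is zero, so only strong $c\lambda$-admissibility is required there.

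The hard part is verifying $(\tilde H^n\cdot S)_t\ge-c\lambda$ for \emph{every} $(t,\omega)$. The sharp case is $t=\sigma\le\rho_\lambda(H^n)$ on a path with a large upward jump of $S$ at $\sigma$: the weak-admissibility bound $(H^n\cdot S)_\sigma\ge-\lambda(1+|S_\sigma|)$ is then blown up by $|S_\sigma|$, and only the buy-and-hold can compensate. The crucial input is the downward jump constraint: since $|S_{s-}|\le K$ for $s\le\sigma$, each coordinate satisfies $S^i_\sigma\ge -K-\psi(K)$, which implies the algebraic inequality
\begin{equation*}
\mathbbm{1}\cdot S_\sigma-|S_\sigma|\ \ge\ -2\sum_{i=1}^{d}(S^i_\sigma)^{-}\ \ge\ -2d\bigl(K+\psi(K)\bigr).
\end{equation*}
Combining this with $|\mathbbm{1}\cdot S_0|\le dK$ produces $(\tilde H^n\cdot S)_\sigma\ge-\lambda\bigl(1+3dK+2d\psi(K)\bigr)=-c\lambda$, exactly the constant claimed. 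All remaining cases ($t<\sigma\wedge\rho_\lambda(H^n)$, or $\rho_\lambda(H^n)<\sigma$, or $t$ after both stopping times) lead to strictly smaller losses, so $\tilde H^n\in\mathcal{H}_{c\lambda}$. Letting $\lambda\downarrow\oQ(A)$ finishes the proof.
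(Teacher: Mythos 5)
Your proof follows essentially the same route as the paper's: stop trading when $|S|$ leaves the $K$-ball, superpose the buy-and-hold position $\lambda\mathbbm{1}$ so that it absorbs both the penalty term in the definition of $\oQ$ and the weak-admissibility loss $\lambda|S_{\rho_\lambda}|$, and control the latter via $\mathbbm{1}\cdot S-|S|\ge-2\sum_{i}(S^i)^-\ge-2d(K+\psi(K))$, which is exactly the paper's key estimate and yields the same constant. The only (minor, repairable) discrepancies are that the paper uses $\mathcal{H}_\lambda\subseteq\G_{\lambda+\epsilon}$ rather than $\mathcal{H}_\lambda\subseteq\G_\lambda$ for the first inequality --- a strongly $\lambda$-admissible strategy may hit $-\lambda$ exactly without then ceasing to trade, so your claim that $\rho_\lambda(H)=\infty$ identically requires this $\epsilon$-perturbation --- and that the paper stops at the entry time of the closed set $\{|x|\ge K\}$, to which Lemma~\ref{lem:stopping} applies directly, whereas your $\sigma=\inf\{t:|S_t|>K\}$ is the d\'ebut of an open set and is not obviously a stopping time for the (non-right-continuous) filtration used here.
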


\begin{proof}
  Since $\mathcal{H}_{\lambda} \subseteq \mathcal{G}_{\lambda+ \epsilon}$ for every $\epsilon>0$, the first inequality holds true. For the second one, assume that $\oQ(A) < \lambda$ for $A \subseteq \Omega_{\psi,K}$. Then there exists a sequence $(G^n)_{n\in \N} \subset \G_\lambda$ such that
  \begin{equation}\label{eq:P-Q-equiv-pr1}
    \liminf_{n\to \infty} \big(\lambda + (G^n\cdot S)_T + \lambda \mathbbm{1}\cdot S_{\rho_\lambda(G^n)}\1_{\{\rho_\lambda(G^n)<\infty\}}\big) \geq \1_A.
  \end{equation}
  Defining the stopping time 
  \begin{equation*}
    \gamma_K(\omega) := \inf \bigl\{ t\in[0,T]\,:\, |S_t(\omega)| \geq K \bigr\}, \quad \omega \in \Omega_\psi,
  \end{equation*}
  we introduce the trading strategy $H^n:= \1_{(0,\rho_\lambda(G^n)\wedge \gamma_K]}\lambda\mathbbm{1} +  \1_{[0,\gamma_K]} G^n$, which satisfies $H^n \in \Hc_{\lambda(1+ 3 d K +2 d \psi (K))}$. Indeed, we observe for $t < \rho_\lambda(G^n)\wedge \gamma_K$ 
  \begin{align*}
    \lambda(1+3 d K +2 d \psi (K)) + (H^n\cdot S)_t 
    &= \lambda(1+3 d K +2 d \psi (K)) + (G^n\cdot S)_t + \lambda \mathbbm{1}\cdot (S_t - S_0)\\
    &\geq \lambda(3 d K +2 d \psi (K))+ \lambda \mathbbm{1}\cdot (S_t - S_0) \\
    &\ge \lambda(3 d K +2 d \psi (K)) - \lambda |\mathbbm{1}\cdot (S_t - S_0)| \\
    &\ge \lambda(3 d K +2 d \psi (K)) - \lambda \sqrt{d} |S_t - S_0|  \ge 0,
  \end{align*}
  and for $t= \rho_\lambda(G^n)\wedge \gamma_K \in (0,T]$ (for $t = 0$ the admissibility is obvious)
  \begin{align*}
    \lambda(1+3 d K +2 d \psi (K))&+ (H^n\cdot S)_{\rho_\lambda(G^n)\wedge \gamma_K} \\
    &=  \lambda(1+3 d K +2 d \psi (K)) + (G^n\cdot S)_{\rho_\lambda(G^n)\wedge \gamma_K} + \lambda\mathbbm{1}\cdot (S_{\rho_\lambda(G^n)\wedge \gamma_K} - S_0)\\
    &\geq \lambda(3 d K +2 d \psi (K)) - \lambda|S_{\rho_\lambda(G^n) \wedge \gamma_K}| + \lambda\mathbbm{1}\cdot (S_{\rho_\lambda(G^n)\wedge \gamma_K} - S_0) \\
    & \geq \lambda(3 d K +2 d \psi (K) - \sqrt{d}K)  + \lambda \sum_{i=1}^d (S^i_{\rho_\lambda(G^n)\wedge \gamma_K} - |S^i_{\rho_\lambda(G^n)\wedge \gamma_K}|) \\
    & = \lambda(3 d K +2 d \psi (K) - \sqrt{d}K)  + \lambda \sum_{i=1}^d 2 S^i_{\rho_\lambda(G^n)\wedge \gamma_K} \1_{S^i_{\rho_\lambda(G^n)\wedge \gamma_K} < 0} \\
    & \ge \lambda(3 d K +2 d \psi (K) - \sqrt{d}K)  - \lambda d 2 (K + \psi(K)) \ge 0,
  \end{align*}
  which extends to $t \in (\rho_\lambda(G^n)\wedge \gamma_K,T]$ because both $\1_{[0,\gamma_K]}G^n$ and $\1_{(0,\rho_\lambda(G^n)\wedge \gamma_K]}$ vanish on that interval. Moreover, from~\eqref{eq:P-Q-equiv-pr1} we get 
  \begin{align*}
    \liminf_{n\to \infty} &\big(\lambda(1+ 3 d K +2 d \psi (K)) +  (H^n \cdot S)_T\big)\\
    &= \liminf_{n\to \infty} \big(\lambda(1+ 3 d K +2 d \psi (K)) + (\1_{[0,\gamma_K]} G^n \cdot S)_T + \lambda \mathbbm{1}\cdot(S_{\rho_\lambda(G^n) \wedge \gamma_K \wedge T} - S_0) \big) 
    \ge \1_A
  \end{align*}
  since $A \subseteq \Omega_{\psi,K}$. Hence, $\oP(A) \leq (1+ 3 d K +2 d \psi (K)) \lambda$, which proves our claim.
\end{proof}

\section{Quadratic variation for typical c\`adl\`ag price paths}\label{sec:quadratic variation}

The existence of the quadratic variation for typical price paths is a crucial ingredient in the construction of the model-free It\^o integral. While the existence was already proven by Vovk in the case of continuous price paths and price paths with jumps with restricted jump size in both directions (see \cite{Vovk2012} resp. \cite{Vovk2015} and cf. Example~\ref{ex:sample spaces}~(1) and (3)), the situation for c\`adl\`ag price paths with an restriction only on jumps directed downwards and thus, in particular, for non-negative c\`adl\`ag price paths was completely unclear so far.\smallskip

In this section we show that the quadratic variation along suitable sequences of partitions exists for typical c\`adl\`ag price paths without any restriction on the jumps directed upwards. We first focus on one-dimensional price paths ($d=1$) and consider
\begin{equation*}
  \Omega_\psi := \left \{ \omega \in \Omega \,:\, \omega(t-)-\omega (t)\leq \psi\bigg(\sup_{s\in[0,t)}|\omega(s)|\bigg)\quad \text{for all } t\in (0,T] \right \}
\end{equation*}
with $\Omega \subseteq D([0,T],\R)$. The extension to general $\Omega_\psi \subseteq D([0,T],\R^d)$ for arbitrary $d\in \N$ is given in Subsection~\ref{subsec:multi-dimensional}. \smallskip

The construction of quadratic variation is based on so-called Lebesgue partitions~$\pi_n(\omega)$. These partitions are generated by stopping times~$(\tau^n_k(\omega))$ acting on the underlying path~$\omega$. Denoting by $\mathbb{D}^n:=\{k2^{-n}\,:\, k\in\mathbb{Z}\}$ the ($n$-th generation of) dyadic numbers for $n\in \N$, $\pi_n(\omega)$~consists of points in time at which the underlying path~$\omega$ crosses  (in space) a dyadic number from~$\mathbb{D}^n$ which is not the same as the dyadic number crossed (as the last number from~$\mathbb{D}^n$) at the  preceding time. This idea is made precise in the next definition.

\begin{definition}\label{def:Lebesgue partition}
  Let $n\in\mathbb{N}$ and let $\mathbb{D}^n:=\{k2^{-n}\,:\, k\in\mathbb{Z}\}$ be the $n$-the generation of dyadic numbers. For a real-valued c\`adl\`ag function $\omega\colon [0,T]\to \R$ its \emph{Lebesgue partition} $\pi_n(\omega):=\{\tau^n_k (\omega)\,:\, k \ge 0\}$ is given by the sequence of stopping times $(\tau^n_k(\omega))_{k\in \mathbb{N}_0}$ inductively defined by
  \begin{equation*}
    \tau^n_{0}(\omega):=0\quad \text{and}\quad D^n_0(\omega):=\max \{\mathbb{D}^n\cap(-\infty,S_0(\omega)]\},
  \end{equation*}
  and for every $k\in \mathbb{N}$ we further set
  \begin{align*}
    \tau^n_k(\omega) &:= \inf \{ t\in[\tau^n_{k-1}(\omega),T]\,:\, \llbracket S_{\tau^n_{k-1}(\omega)}(\omega),S_t(\omega) \rrbracket \cap (\mathbb{D}^n\setminus\{D^n_{k-1}(\omega)\}) \ne \emptyset \},\\
    D^n_k(\omega) &:= \argmin_{ D \in \llbracket S_{\tau^n_{k-1}(\omega)}(\omega),S_{\tau^n_k(\omega)}(\omega) \rrbracket \cap (\mathbb{D}^n\setminus\{D^n_{k-1}(\omega)\} )} | D - S_{\tau^n_k(\omega)}(\omega)|,
  \end{align*}
  with the convention $\inf \emptyset = \infty$ and
  \begin{equation*}
    \llbracket u,v\rrbracket := \begin{cases}
                                  [u,v] & \text{if $u\le v$},\\
                                  [v,u] & \text{if $u>v$}.
                                \end{cases}
  \end{equation*}
\end{definition}

Notice that $D^n_k(\omega)$, $k\in \N$, is $\mathcal{F}_{\tau^n_k}$-measurable function taking values in $\mathbb{D}^n$ and $\tau^n_k(\omega)$ is indeed a stopping time (cf.~\cite[Lemma~3]{Vovk2015}). In the following we often just write~$\tau^n_k$ and~$\pi_n$ instead of~$\tau^n_k(\omega)$ and~$\pi_n(\omega)$, respectively.\smallskip

Along the sequence of Lebesgue partitions we obtain the existence of the quadratic variation for typical c\`adl\`ag price paths. 

\begin{theorem}\label{thm:quadratic variation}
  For typical price paths $\omega\in \Omega_\psi \subseteq D([0,T],\R)$ the discrete quadratic variation 
  \begin{equation*}
    Q^n_t(\omega) := \sum_{k=1}^{\infty} \left( S_{\tau^n_k\wedge t}(\omega) - S_{\tau^n_{k-1}\wedge t}(\omega) \right)^2,\quad t \in [0,T],
  \end{equation*}
  along the Lebesgue partitions $(\pi_n(\omega))_{n\in \N}$ converges in the uniform metric to a function $[S](\omega) \in D([0,T],\R_+)$.
\end{theorem}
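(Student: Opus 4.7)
The plan is to show that $(Q^n)_{n\in\N}$ is uniformly Cauchy on $[0,T]$ for typical paths, from which the claimed convergence to a c\`adl\`ag limit $[S]$ follows, since uniform limits of non-decreasing c\`adl\`ag functions are non-decreasing and c\`adl\`ag. The first step is to localize: for each $K\in\N$ set $\gamma_K := \inf\{t\in[0,T]\,:\, |S_t| \ge K\}$ and observe that, by Lemma~\ref{lem:outer measure equivalence}, a $\overline{Q}$-null subset of $\Omega_{\psi,K}$ is $\overline{P}$-null; a countable exhaustion $K\to\infty$ then reduces the theorem to showing that, for each $K$, the set of paths in $\Omega_{\psi,K}$ for which $(Q^n)$ fails to converge uniformly is $\overline{Q}$-null.

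The engine is an elementary algebraic identity obtained by refining each $n$-level squared increment via $(n+1)$-stops. A preliminary observation is that every $\tau^n_k$ is also some $\tau^{n+1}_l$: the dyadic $D^n_k$ lies in $\mathbb{D}^{n+1}$ and cannot coincide with the previous $\mathbb{D}^{n+1}$-value crossed, because on $(\tau^n_{k-1},\tau^n_k)$ the path is trapped strictly inside the $\mathbb{D}^n$-strip around $D^n_{k-1}$. Writing $\sigma^j_0=\tau^n_{j-1},\sigma^j_1,\dots,\sigma^j_{L_j}=\tau^n_j$ for the $(n+1)$-stops in $[\tau^n_{j-1},\tau^n_j]$, expanding $(S_{\tau^n_j}-S_{\tau^n_{j-1}})^2=\bigl(\sum_k(S_{\sigma^j_k}-S_{\sigma^j_{k-1}})\bigr)^2$ and telescoping over $j$ yields (up to a small edge term $R^n_t$ supported on the $n$-interval straddling $t$)
\[
Q^{n+1}_t - Q^n_t = -2\,(H^n\cdot S)_t + R^n_t,\quad H^n_s := S_{\tau^{n+1}_{l-1}}-S_{\tau^n_{j(l)-1}}\ \text{ for }\ s\in(\tau^{n+1}_{l-1},\tau^{n+1}_l],
\]
where $j(l)$ is the index of the $n$-interval containing $\tau^{n+1}_{l-1}$. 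Crucially, $|H^n_s|\le 4\cdot 2^{-n}$ uniformly in $(s,\omega)$: since $\tau^{n+1}_{l-1}<\tau^n_{j(l)}$ lies strictly before any possibly unbounded upward jump at $\tau^n_{j(l)}$, both $S_{\tau^{n+1}_{l-1}}$ and $S_{\tau^n_{j(l)-1}}$ are within $2^{-n}$ of $D^n_{j(l)-1}$.

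To turn this pathwise identity into a $\overline{Q}$-estimate for
\[
A^K_n := \bigl\{\omega\in\Omega_{\psi,K}\,:\, \sup_{t\in[0,T]}|Q^{n+1}_t(\omega)-Q^n_t(\omega)| > 2^{-n/4}\bigr\},
\]
I would rescale to $\tilde H^n := c\,2^{n/2}H^n$ (so $|\tilde H^n|\le 4c\cdot 2^{-n/2}$) and combine $\pm\tilde H^n$ with a pathwise maximal inequality (essentially the pathwise Burkholder-Davis-Gundy inequality of Beiglb\"ock-Siorpaes invoked later in the paper, or a doubling argument on the stopping times $\inf\{t\,:\,|Q^{n+1}_t-Q^n_t|>\delta\}$ used by Vovk in~\cite{Vovk2015}) to produce a weakly $\lambda_n$-admissible simple strategy superhedging $\1_{A^K_n}$ with $\lambda_n=O(2^{-n/4})$. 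Borel-Cantelli together with Lemma~\ref{lem:outer measure equivalence} then gives $\overline{P}(\limsup_n A^K_n)=0$ inside $\Omega_{\psi,K}$, which is the uniform Cauchy property; sending $K\to\infty$ completes the proof.

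The main obstacle I expect is the admissibility bookkeeping under unrestricted upward jumps. Although $|H^n|\le 4\cdot 2^{-n}$ is uniform, the running value $(H^n\cdot S)_t$ can drop by $H^n_{\tau^n_j}\cdot \Delta S_{\tau^n_j}$ at an upward jump, and while $|\Delta S_{\tau^n_j}|\le 2K$ on $\Omega_{\psi,K}$, the number of such jumps is a priori uncontrolled, ruling out strong admissibility. This is exactly what weak admissibility and the compensating $\lambda\mathbbm{1}\cdot S_{\rho_\lambda(H^n)}$ term in Definition~\ref{def:set function} are designed for: one stops the strategy at the first time the running value drops below $-\lambda$, paying a one-time loss absorbed by the extra term in $\overline{Q}$. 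Calibrating the admissibility constants against the desired Borel-Cantelli summability rate is the core technical difficulty.
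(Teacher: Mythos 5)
Your algebraic identity is exactly the one the paper uses: writing $Z^n=Q^n-Q^{n-1}$, the proof computes $Z^n_{\tau^n_{k+1}\wedge t}-Z^n_{\tau^n_k\wedge t}=-2\big(S_{\tau^n_k\wedge t}-S_{\chi^{n-1}(\tau^n_k\wedge t)}\big)\big(S_{\tau^n_{k+1}\wedge t}-S_{\tau^n_k\wedge t}\big)$, so $Z^n$ is an integral process against $S$ with integrand bounded by $2^{-n+2}$ (your $H^n$, up to an index shift), and weak admissibility together with the compensating term in $\overline{Q}$ is indeed what absorbs the unbounded upward jumps. So the skeleton is right. But there is a genuine gap at the step you compress into ``combine $\pm\tilde H^n$ with a pathwise maximal inequality \dots\ to superhedge $\1_{A^K_n}$ with $\lambda_n=O(2^{-n/4})$.'' The Beiglb\"ock--Siorpaes inequality bounds $\sup_k|Z^n_{\tau^n_k}|$ by $6\sqrt{\sum_k(\Delta_k Z^n)^2}+2(h\cdot Z^n)$; the second term is a tradeable integral process, but the bracket term is not, and it is not deterministically small either: $\sum_k(\Delta_k Z^n)^2\le 2^{6-2n}\sum_k(\Delta_k S)^2=2^{6-2n}Q^n_T$, and $Q^n_T$ is a priori unbounded on $\Omega_{\psi,K}$ because the number of level crossings is uncontrolled. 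Without a separate control of this quantity your superhedging price is not $O(2^{-n/4})$ and Borel--Cantelli does not close.

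This is precisely where the paper spends most of its effort, and it is entirely absent from your proposal. The paper (i) proves a pathwise Doob upcrossing lemma (Lemma~\ref{lem:Doob upcrossing}) with \emph{strongly} admissible strategies, deducing that for typical paths $\U_T(\omega,2^{-n})$ and $\D_T(\omega,2^{-n})$ are at most $n^22^{2n}$ for large $n$ (Corollary~\ref{cor:crossings}); (ii) on that good event it bounds $\sum_k(\Delta_k Z^n)^2\le n^42^{-2n}$ purely deterministically via a case analysis of the increments $S_{\tau^n_k,\tau^n_{k+1}}$ according to whether $\tau^n_{k+1}\in\pi_{n-1}$ and the size of the jump there (steps (1)--(4) of the proof); and (iii) instead of BDG it uses the discrete ``It\^o'' identity making $\mathcal K^n=\mathrm{const}+(Z^n)^2-\sum_k(\Delta_k Z^n)^2$ itself a weakly admissible integral process (Lemma~\ref{lem:trading strategy}), so that the superhedging bound $\mathcal K^n<n^62^{-n}$ (Corollary~\ref{cor:integral process}) combined with (ii) yields $(Z^n_t)^2\le\mathcal K^n<n^62^{-n}$. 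Your diagnosis that ``calibrating the admissibility constants'' is the core difficulty misses the mark: weak admissibility handles the jumps cleanly; the real missing idea is the crossing-count control of the bracket. (A small additional slip: the $Q^n$ are not monotone in $t$, so the monotonicity of the limit needs Vovk's Lemma~2 as in the remark after the theorem, not a ``uniform limit of non-decreasing functions'' argument.)
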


\begin{remark}
  Since by \cite[Lemma~3]{Vovk2015} the sequence of partitions $(\pi_n)_{n\in \N}$ is increasing and exhausts the jumps of~$S(\omega)$, \cite[Lemma~2]{Vovk2015} states that the limit $[S](\omega)$ will be a non-decreasing c\`adl\`ag function satisfying $[S]_0(\omega)=0$ and $\Delta[S]_t(\omega) =(\Delta S_t(\omega))^2$ for all $t\in(0,T]$ and all $\omega \in \Omega_\psi$ for which the convergence in Theorem~\ref{thm:quadratic variation} holds. Here we used the notation $\Delta f_t:= f_t-\lim_{s\to t,\,s<t}f_s$ for $f\in D([0,T],\R)$.
\end{remark}

In order to prove Theorem~\ref{thm:quadratic variation}, we first analyze the crossings behavior of typical price paths with respect to the dyadic levels $\mathbb{D}^n$. To be more precise, we introduce the number of upcrossings (resp. downcrossings) of a function $f$ over an open interval $(a,b)\subset \R$. 

\begin{definition}
  Let $f\colon [0,T]\to \R$ be a c\`adl\`ag function, $(a,b)\subset \R$ be an open non-empty interval and $t\in [0,T]$. The number $\U_t^{(a,b)}(f)$ of upcrossings of the interval $(a,b)$ by the function $f$ during the time interval $[0,t]$ is given by 
  \begin{equation*}
    \U_t^{(a,b)}(f):= \sup_{n\in \N} \sup_{0\leq s_1 <t_1 <\dots <s_n<t_n \leq t} \sum_{i=1}^n I(f(s_i),f(t_i)),  
  \end{equation*}
  where 
  \begin{equation*}
    I(f(s_i),f(t_i)):= \begin{cases}
                         1 & \text{if } f(s_i)\leq a \text{ and } f(t_i)\geq b,\\
                         0 & \text{if otherwise}.
                       \end{cases}
  \end{equation*}
  The number $\D_t^{(a,b)}(f)$ of downcrossings is  defined analogously. For $h>0$ we also introduce the accumulated number of upcrossing respectively downcrossings by
  \begin{equation*}
    \U_t(f,h) := \sum_{k\in\mathbb{Z}} \U_t^{(k h,(k+1)h)}(f) \quad \text{and}\quad \D_t(f,h) := \sum_{k\in\mathbb{Z}} \D_t^{(k h,(k+1)h)}(f).
  \end{equation*}
\end{definition}

To derive a deterministic inequality in the spirit of Doob's upcrossing lemma, we use the stopping times
\begin{equation*}
  \gamma_K(\omega) := \inf \left\{ t\in[0,T]\,:\, |S_t(\omega)| \geq K \right\}
\end{equation*}
for $\omega \in \Omega_\psi$ and $K \in \N$. 

\begin{lemma}\label{lem:Doob upcrossing}
  Let $K>0$. For each $n\in\mathbb{N}$, there exists a strongly $1$-admissible simple strategy $H^n \in \mathcal{H}_1$ such that 
  \begin{equation*}
    1+(H^n\cdot S)_{t}(\omega)\ge [2K (2K+\psi(K))]^{-1} 2^{-2n} \U_{t}(\omega,2^{-n})
  \end{equation*}
  for all $t\in [0,T]$ and every $\omega\in \{ \omega \in \Omega_\psi  \,:\, \|\omega\|_\infty < K \}\subseteq D([0,T],\R)$.
\end{lemma}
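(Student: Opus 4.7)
My plan is to build $H^n$ as a normalized superposition of elementary ``buy-low-sell-high'' strategies, one for every dyadic level $k 2^{-n}\in[-K,K-2^{-n}]$ (there are about $2K\cdot 2^n$ of them). For each such $k$ set $\sigma_0^{n,k}:=0$ and define inductively
\begin{equation*}
\sigma_{2j-1}^{n,k}:=\inf\{t\ge\sigma_{2j-2}^{n,k}\,:\,S_t\le k 2^{-n}\},\qquad \sigma_{2j}^{n,k}:=\inf\{t\ge\sigma_{2j-1}^{n,k}\,:\,S_t\ge(k+1)2^{-n}\}
\end{equation*}
for $j\ge 1$; with $\gamma_K$ as the exit time of $|S|$ from $(-K,K)$, the building block is $H^{n,k}_t:=\sum_{j\ge 1}\1_{(\sigma_{2j-1}^{n,k}\wedge\gamma_K,\sigma_{2j}^{n,k}\wedge\gamma_K]}(t)$. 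Since a c\`adl\`ag function has only finitely many upcrossings of any fixed non-degenerate interval on $[0,T]$, each $H^{n,k}$ is a simple strategy; I then take $H^n:=c_n\sum_k H^{n,k}$ with constant $c_n:=[2K(2K+\psi(K))2^n]^{-1}$ and the sum over the finitely many relevant $k$.

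Two elementary estimates drive the proof. First, each completed holding period contributes gain $\ge 2^{-n}$, because the buy price $S_{\sigma_{2j-1}^{n,k}}\le k 2^{-n}$ and the sell price $S_{\sigma_{2j}^{n,k}}\ge(k+1)2^{-n}$ (upward jumps at the sell time only help, downward jumps at the buy time only lower the buy price). Thus for $\omega$ with $\|\omega\|_\infty<K$, where $\gamma_K(\omega)>T$, the completed upcrossings alone contribute $c_n 2^{-n}\sum_k \U_t^{(k 2^{-n},(k+1)2^{-n})}(\omega)=c_n 2^{-n}\U_t(\omega,2^{-n})$. Second, the only drawdown of $H^{n,k}$ is the unrealized loss on a single open long position, namely $S_{\text{buy}}-S_t\le k 2^{-n}-S_t$: on $[0,\gamma_K)$ we have $|S_t|<K$ giving a bound $K+k 2^{-n}$, and at $t=\gamma_K$ the defining downward-jump bound in $\Omega_\psi$ forces $S_{\gamma_K}\ge -K-\psi(K)$, hence a bound $K+\psi(K)+k 2^{-n}$; upward jumps at $\gamma_K$ only benefit a long position. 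Summing over $k$, the symmetric part $\sum_k k 2^{-n}$ contributes only $O(1)$ and
\begin{equation*}
c_n\sum_k\bigl(K+\psi(K)+k 2^{-n}\bigr)\le c_n\cdot 2K(K+\psi(K))2^n=\frac{K+\psi(K)}{2K+\psi(K)}<1,
\end{equation*}
which yields strong $1$-admissibility of $H^n$ on all of $\Omega_\psi$.

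Combining both estimates for $\omega$ with $\|\omega\|_\infty<K$ gives $(H^n\cdot S)_t(\omega)\ge c_n 2^{-n}\U_t(\omega,2^{-n})-1$, and adding $1$ to both sides yields the claim because $c_n\cdot 2^{-n}=[2K(2K+\psi(K))]^{-1}2^{-2n}$. The main technical point I anticipate is the careful accounting at the exit time $\gamma_K$: the factor $\psi(K)$ in the constant enters \emph{only} to absorb how far $S$ can overshoot the level $-K$ through a single downward jump, and this is precisely where the constraint defining $\Omega_\psi$ is essential. The rest is bookkeeping—summing a nearly symmetric arithmetic series in $k$ to see that the total drawdown budget stays strictly below $1$.
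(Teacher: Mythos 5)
Your proof is correct and follows essentially the same route as the paper's: a normalized superposition of buy-at-$k2^{-n}$/sell-at-$(k+1)2^{-n}$ strategies over the dyadic levels inside $(-K,K)$, stopped at $\gamma_K$, with the identical normalizing constant $[2K\,2^{n}(2K+\psi(K))]^{-1}$ and with $\psi(K)$ entering exactly to absorb the downward overshoot of $S$ at $\gamma_K$. The only cosmetic difference is that the paper bounds each elementary strategy's admissibility constant crudely by $2K+\psi(K)$ and counts at most $2K\cdot 2^{n}$ levels strictly inside $(-K,K)$, rather than summing the nearly symmetric arithmetic series in $k$ as you do.
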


\begin{proof}
  Let us start by considering the upcrossings $\U_t^{(a,b)}(\omega)$ of an interval $(a,b)\subseteq[-K,K]$. By buying one unit the first time $S_t(\omega)$ drops below ~$a$ and selling the next time $S_t(\omega)$ goes above~$b$ and continuing in this manner until the terminal time $T$ or until we leave the interval $(-K,K)$, whatever occurs first, we obtain a simple strategy $H^{(a,b)} \in \mathcal{H}_{a+K+\psi(K)}$ with
  \begin{equation*}
    a+ K+\psi (K) +  (H^{(a,b)}\cdot S)_{t \wedge \gamma_K}(\omega)
    \ge (b-a)\U_{t \wedge \gamma_K}^{(a,b)}(\omega), \quad (t,\omega) \in [0,T]\times \Omega_\psi.
  \end{equation*}
  For a formal construction of $H^{(a,b)}$ we refer to \cite[Lemma~4.5]{Vovk2015}. Note that we need the predictable bound of the jump size given by~$\psi$ to guarantee the strong admissibility of $H^{(a,b)}$. Set now
  \begin{equation*}
    H^n := [K 2^{n+1}(2K+\psi(K))]^{-1} \sum_{\substack{ k \in \Z,\, (k+1) 2^{-n} < K,\\ k2^{-n}> -K} }  H^{(k2^{-n},(k+1)2^{-n})}.
  \end{equation*}
  Since $H^{(k2^{-n},(k+1)2^{-n})} \in \mathcal{H}_{k 2^{-n}+K+\psi (K)} \subseteq \mathcal{H}_{2K+\psi(K)}$ for all $k$ with $(k+1)2^{-n} < K$ and $k2^{-n}> -K$, we have $H^n \in \mathcal{H}_1$, and
  \begin{align*}
    1+ (H^n \cdot S)_{t (\omega)}(\omega) 
    &\geq [K 2^{n+1}(2K+\psi(K))]^{-1} \sum_{ \substack{ k \in \Z,\, (k+1) 2^{-n} < K,\\ k2^{-n}> -K} } 2^{-n} \U_{t (\omega)}^{(k2^{-n},(k+1)2^{-n})}(\omega)\\
    &= [2 K(2K+\psi(K))]^{-1}2^{-2n} \U_{t}(\omega,2^{-n})
  \end{align*}
for each $t\in [0,T]$ and all $\omega \in \{ \omega \in \Omega_\psi  \,:\, \|\omega\|_\infty < K \}$, which proves the claim.
\end{proof}

With this pathwise version of Doob's upcrossing lemma at hand, we can control the number of level crossings of typical c\`adl\`ag price paths belonging to~$\Omega_\psi$. 

\begin{corollary}\label{cor:crossings}
  For typical price paths $\omega\in\Omega_\psi \subseteq D([0,T],\R)$ there exist an $N(\omega)\in \N$ such that 
  \begin{equation*}
    \U_{T}(\omega,2^{-n})\leq n^2 2^{2n}\quad \text{and}\quad \D_{T}(\omega,2^{-n})\leq n^2 2^{2n}
  \end{equation*}
  for all $n\geq N(\omega)$.
\end{corollary}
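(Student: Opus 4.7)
The plan is to combine Lemma~\ref{lem:Doob upcrossing} with a Borel--Cantelli argument made available by the countable subadditivity of $\oP$, and then to deduce the downcrossing bound from the upcrossing bound via a purely combinatorial comparison. Throughout, the key observation is that every c\`adl\`ag path on $[0,T]$ is bounded, so $\Omega_\psi=\bigcup_{K\in\N}\{\omega\in\Omega_\psi:\|\omega\|_\infty<K\}$, and it suffices to control each piece separately.

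For each $K\in\N$ and $n\in\N$, I would set
\begin{equation*}
A_{n,K} := \bigl\{\omega\in\Omega_\psi : \|\omega\|_\infty<K,\ \U_T(\omega,2^{-n})> n^2 2^{2n}\bigr\}.
\end{equation*}
Lemma~\ref{lem:Doob upcrossing} supplies a strategy $H^n\in\mathcal{H}_1$ such that on $A_{n,K}$ one has $1+(H^n\cdot S)_T(\omega)\ge n^2[2K(2K+\psi(K))]^{-1}$. Rescaling $H^n$ by $c_{K,n}:=2K(2K+\psi(K))/n^2$ produces a strongly $c_{K,n}$-admissible simple strategy whose terminal payoff dominates $\1_{A_{n,K}}$, hence $\oP(A_{n,K})\le c_{K,n}$. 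Since $\sum_n c_{K,n}<\infty$, countable subadditivity of $\oP$ yields
\begin{equation*}
\oP\Bigl(\bigcap_{N\in\N}\bigcup_{n\ge N}A_{n,K}\Bigr)\le \lim_{N\to\infty}\sum_{n\ge N}c_{K,n}=0.
\end{equation*}
A further countable union over $K\in\N$ shows that the event $\{\U_T(\cdot,2^{-n})>n^2 2^{2n}\text{ infinitely often}\}$ is $\oP$-null, which gives $N(\omega)$ with $\U_T(\omega,2^{-n})\le n^2 2^{2n}$ for all $n\ge N(\omega)$ for typical $\omega$.

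For downcrossings I would exploit the elementary pathwise inequality $\D_T^{(a,b)}(\omega)\le \U_T^{(a,b)}(\omega)+1$, valid for any c\`adl\`ag $\omega$ and any open interval $(a,b)$, since up- and downcrossings of $(a,b)$ must alternate. On the event $\{\|\omega\|_\infty<K\}$ only those dyadic intervals $(k2^{-n},(k+1)2^{-n})$ with $-K\le k2^{-n}<(k+1)2^{-n}\le K$ can contribute a crossing, and there are at most $2K\cdot 2^n$ such $k$. Summing the above inequality gives
\begin{equation*}
\D_T(\omega,2^{-n})\le \U_T(\omega,2^{-n})+2K\cdot 2^n.
\end{equation*}
Running the argument of the previous paragraph with $n^2/2$ in place of $n^2$ (the series $\sum_n 2c_{K,n}$ is still summable) gives $\U_T(\omega,2^{-n})\le \tfrac12 n^2 2^{2n}$ eventually for typical $\omega$, and for $n$ large enough also $2K\cdot 2^n\le \tfrac12 n^2 2^{2n}$; the combination yields $\D_T(\omega,2^{-n})\le n^2 2^{2n}$, enlarging $N(\omega)$ if necessary.

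The main obstacle is the asymmetry built into $\Omega_\psi$: because only downward jumps are bounded, one cannot reflect $\omega\mapsto-\omega$ to turn the Doob upcrossing strategy into a bounded-from-below downcrossing strategy, and the naive ``short high, cover low'' approach is not strongly admissible since an upward jump can create an arbitrarily large loss on a short position. The combinatorial bound $\D\le\U+1$ per interval is exactly what sidesteps this difficulty, reducing the downcrossing statement to the upcrossing statement at the cost of a lower-order correction in $n$.
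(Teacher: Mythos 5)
Your proposal is correct and follows essentially the same route as the paper: Lemma~\ref{lem:Doob upcrossing} plus rescaling to bound $\oP$ of the bad upcrossing events by a summable sequence, Borel--Cantelli via countable subadditivity, and the per-interval comparison $|\U_T^{(a,b)}-\D_T^{(a,b)}|\le 1$ summed over the at most $2K\cdot 2^n$ relevant dyadic intervals to transfer the bound to downcrossings. The only cosmetic difference is that the paper works from the outset with the threshold $n^2 2^{2n}/2$ rather than running the argument twice.
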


\begin{proof}
  Since for each $k \in \mathbb{Z},$ $\U_{t}^{(k2^{-n},(k+1)2^{-n})}(\omega)$ and $\D_{t}^{(k2^{-n},(k+1)2^{-n})}(\omega)$ differ by no more than $1$, we have $|\U_{T}(\omega,2^{-n})-\D_{T}(\omega,2^{-n})| \in [0, 2^{n+1} K]$ for all $n\in \N$ and for every $\omega \in \Omega_\psi$ with $\sup_{t\in [0,T]}|S_t (\omega)|< K$. So if we show that $\oP(B_K) = 0$ for all $K \in \N$, where 
  \begin{equation*}
    B_K := \bigcap_{m \in \N} \bigcup_{n \ge m} A_{K,n}
  \end{equation*}
  with
  \begin{equation*}
    A_{K,n} = \bigg\{ \omega \in \Omega_\psi\,:\, \sup_{t\in [0,T]}|S_t(\omega)|< K \text{ and } \U_{T}(\omega,2^{-n})\geq \frac{n^2 2^{2n}}{2} \bigg\},
  \end{equation*}
  then our claim follows from the countable subadditivity of $\oP$. But using Lemma~\ref{lem:Doob upcrossing} we immediately obtain that $\oP(A_{K,n}) \leq n^{-2} 2 [2 K(2K+\psi(K))]$, and since this is summable, it suffices to apply the Borel-Cantelli lemma (see Lemma~\ref{lem:Borel-Cantelli}) to see $\oP(B_K) = 0$.
\end{proof}

To prove the convergence of the discrete quadratic variation processes $(Q^n)_{n\in \N}$, we shall show that the sequence $(Q^n)_{n\in \N}$ is a Cauchy sequence in the uniform metric on $D([0,T],\R_+)$. For this purpose, we define the auxiliary sequence $(Z^n)_{n\in \N}$ by
\begin{equation*}
  Z^n_t:=Q^n_t-Q^{n-1}_t,\quad t\in [0,T].
\end{equation*}

Similarly as in Vovk~\cite{Vovk2015}, the proof of Theorem~\ref{thm:quadratic variation} is based on the sequence of integral processes $(\mathcal{K}^n)_{n\in \N}$ given by
\begin{equation}\label{eq:integral process}
  \mathcal{K}^n_t :=  n^4 2^{-2n} + 2^{-n+5} (K+\psi(K))^2 + (Z^n_t)^2 -\sum_{k=1}^{\infty} ( Z^n_{\tau^n_k\wedge t} - Z^n_{\tau^n_{k-1}\wedge t} )^2,\quad t\in [0,T],
\end{equation}
for $K\in \N$, and the stopping times
\begin{equation}\label{eq:sigma-n}
  \sigma^n_K := \min \bigg\{ \tau^n_k \,:\, \sum_{i=1}^{k} \big( Z^n_{\tau^n_i} - Z^n_{\tau^n_{i-1}} \big)^2 > n^4 2^{-2n} \bigg\}\wedge\min \left\{ \tau^n_k \,:\,  Z^n_{\tau^n_k} >K \right\},\quad n \in \N.
\end{equation}

The next lemma states that each~$\mathcal{K}^n$ is indeed an integral process with respect to a weakly admissible simple strategy, cf.~\cite[Lemma~5]{Vovk2015}.

\begin{lemma}\label{lem:trading strategy}
  For each $n\in \mathbb{N}$ and $K \in \N$, there exists a weakly admissible simple strategy $L^{K,n} \in \G_{n^4 2^{-2n} + 2^{-n+5} (K+\psi(K))^2}$ such that
  \begin{equation*}
    \mathcal{K}^n_{\gamma_K \wedge\sigma^n_K \wedge t} = n^4 2^{-2n} + 2^{-n+5} (K+\psi(K))^2 + (L^{K,n}\cdot S)_t,\quad t \in [0,T].
  \end{equation*}
\end{lemma}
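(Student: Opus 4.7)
The plan is to realize the increments of $\mathcal{K}^n$ across the $\pi_n$-intervals as trading gains of an explicitly prescribed simple strategy. The algebraic starting point is the identity $a^2-b^2-(a-b)^2 = 2b(a-b)$ applied with $a = Z^n_t$ and $b = Z^n_{\tau^n_{k-1}}$, which gives for every $k\geq 1$ and every $t \in (\tau^n_{k-1},\tau^n_k]$
\[
\mathcal{K}^n_t - \mathcal{K}^n_{\tau^n_{k-1}} = 2\,Z^n_{\tau^n_{k-1}}\bigl(Z^n_t - Z^n_{\tau^n_{k-1}}\bigr),
\]
so it suffices to write the $Z^n$-increment as the gain over $(\tau^n_{k-1},t]$ of a strategy with an $\F_{\tau^n_{k-1}}$-measurable position.

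For this I use the key fact that $\pi_{n-1}\subseteq\pi_n$ (\cite[Lemma~3]{Vovk2015}). Let $l=l(k)$ be the unique index with $\tau^{n-1}_{l-1}\leq\tau^n_{k-1}<\tau^{n-1}_l$; then $\tau^n_k\leq\tau^{n-1}_l$, so no $\pi_{n-1}$-point lies strictly inside $(\tau^n_{k-1},\tau^n_k)$, and for any $t\in[\tau^n_{k-1},\tau^n_k]$
\[
Q^n_t-Q^n_{\tau^n_{k-1}}=(S_t-S_{\tau^n_{k-1}})^2,\qquad Q^{n-1}_t-Q^{n-1}_{\tau^n_{k-1}}=(S_t-S_{\tau^{n-1}_{l-1}})^2-(S_{\tau^n_{k-1}}-S_{\tau^{n-1}_{l-1}})^2.
\]
Writing $a=S_t-S_{\tau^n_{k-1}}$, $b=S_{\tau^n_{k-1}}-S_{\tau^{n-1}_{l-1}}$ and using $a^2-(a+b)^2+b^2=-2ab$, I obtain $Z^n_t-Z^n_{\tau^n_{k-1}}=-2ba$, and hence
\[
\mathcal{K}^n_t-\mathcal{K}^n_{\tau^n_{k-1}} = h_{k-1}\bigl(S_t-S_{\tau^n_{k-1}}\bigr),\qquad h_{k-1}:=-4\,Z^n_{\tau^n_{k-1}}\bigl(S_{\tau^n_{k-1}}-S_{\tau^{n-1}_{l-1}}\bigr).
\]

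The scalar $h_{k-1}$ is $\F_{\tau^n_{k-1}}$-measurable and bounded: the Lebesgue-partition structure forces $|S_{\tau^n_{k-1}}-S_{\tau^{n-1}_{l-1}}|\leq 2^{-n+2}$ (both values lie within $2^{-(n-1)}$ of $D^{n-1}_{l-1}$ while no other $\mathbb{D}^{n-1}$-level is crossed), and $|Z^n_{\tau^n_{k-1}}|$ is controlled through the definition of $\sigma^n_K$. I therefore set
\[
L^{K,n}_t:=\sum_{k\geq 1} h_{k-1}\,\1_{(\tau^n_{k-1}\wedge\gamma_K\wedge\sigma^n_K,\,\tau^n_k\wedge\gamma_K\wedge\sigma^n_K]}(t),
\]
which is a simple strategy, and telescoping the piecewise identity over $k$ yields $(L^{K,n}\cdot S)_t=\mathcal{K}^n_{t\wedge\gamma_K\wedge\sigma^n_K}-\lambda$ with $\lambda=\mathcal{K}^n_0=n^4 2^{-2n}+2^{-n+5}(K+\psi(K))^2$, which is the claimed representation.

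The main obstacle is verifying weak $\lambda$-admissibility. For strict $t<\gamma_K\wedge\sigma^n_K$ the defining inequality of $\sigma^n_K$ forces $\sum_k(Z^n_{\tau^n_k\wedge t}-Z^n_{\tau^n_{k-1}\wedge t})^2\leq n^4 2^{-2n}$, and since $(Z^n_t)^2\geq 0$ one gets $(L^{K,n}\cdot S)_t\geq -n^4 2^{-2n}>-\lambda$. The only way the trading gain can fall below $-\lambda$ is through a single (necessarily upward, since downward jumps are controlled by $\psi$) large jump of $S$ at the stopping time $\gamma_K\wedge\sigma^n_K$ itself, which simultaneously inflates $|S_{\gamma_K\wedge\sigma^n_K}|$. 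Quantifying this one-step loss using $|h_{k-1}|\leq 2^{-n+4}(K+\psi(K))$ and the $\psi$-bound on downward jumps, one checks that the coefficient $2^{-n+5}(K+\psi(K))^2$ in $\lambda$ is precisely tuned so that $(L^{K,n}\cdot S)_{\gamma_K\wedge\sigma^n_K}\geq -\lambda(1+|S_{\gamma_K\wedge\sigma^n_K}|)$. Since $L^{K,n}$ vanishes after the stopping time, weak $\lambda$-admissibility then follows.
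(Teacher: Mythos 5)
Your proof is correct and takes essentially the same route as the paper, which writes down the same strategy and obtains the per-interval identity $\mathcal{K}^n_t-\mathcal{K}^n_{\tau^n_{k-1}}=-4Z^n_{\tau^n_{k-1}}(S_{\tau^n_{k-1}}-S_{\chi^{n-1}(\tau^n_{k-1})})(S_t-S_{\tau^n_{k-1}})$ by citing \cite[Lemma~5]{Vovk2015} (you re-derive it, which is fine), and then verifies weak admissibility with the same two ingredients, namely $|S_{\tau^n_k}-S_{\chi^{n-1}(\tau^n_k)}|\le 2^{-n+2}$ and the bound on $Z^n$ enforced by $\sigma^n_K$. The only slight imprecision is that the definition of $\sigma^n_K$ controls $\sum_i(Z^n_{\tau^n_i}-Z^n_{\tau^n_{i-1}})^2$ only at the partition points, so for $t$ strictly inside a partition interval the last, partial increment must be estimated separately (it is at most $\bigl(2\cdot 2^{-n+2}\cdot 2^{-n+1}\bigr)^2$ because no new level of $\mathbb{D}^n$ has yet been crossed) and absorbed by the $2^{-n+5}(K+\psi(K))^2$ buffer — this is precisely the role of the paper's split at $\tilde\tau^n$.
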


\begin{proof}
  For each $K\in \N$ and each $n\in\N$ \cite[Lemma~5]{Vovk2015} shows the equality for the strategy
  \begin{equation*}
    L^{K,n}_t := \1_{(0, \gamma_K \wedge\sigma^n_K]}(t) \sum_{k} (-4) Z^n_{\tau^n_k} ( S_{\tau^n_k} - S_{\chi^{n-1}(\tau^n_k)})\1_{(\tau^n_{k},\tau^n_{k+1}]}(t),\quad t \in [0,T],
  \end{equation*}
  where 
  \begin{equation*}
    \chi^{n-1}(t) := \max\left\{\tau^{n-1}_{k'}\;:\;\tau^{n-1}_{k'}\leq t\right\}. 
  \end{equation*}
  Since $L^{K,n}$ is obviously a simple strategy, it remains to prove that 
  \begin{equation}\label{eq:weak admissibility of L}
    L^{K,n} \in \G_{n^4 2^{-2n} + 2^{-n+5} (K+\psi(K))^2}.
  \end{equation}
  First we observe up to time $\tilde\tau^n:=\max\{ \tau_k^n \,:\, \tau_k^n < \gamma_K \wedge \sigma^n_K\}$ that
  \begin{equation}\label{eq:trading strategy pr1}
    \min_{t \in [0,\tilde \tau^n]} \mathcal{K}^n_{\gamma_K \wedge\sigma^n_K\wedge t} \geq  2^{-n+5} (K+\psi(K))^2,
  \end{equation}
  which follows directly from the definition of $\mathcal{K}^n$ and~\eqref{eq:sigma-n}. For $t\in (\tilde\tau^n ,\gamma_K \wedge \sigma^n_K]$ notice that 
  \begin{equation}\label{eq:estimate K}
    |S_{\tilde \tau^n}-S_{\chi^{n-1}(\tilde\tau^n)}|\leq 2^{-n+2},
  \end{equation}
  since we either have $\tilde \tau^n \in \pi_{n-1}$, which implies $\chi^{n-1}(\tilde\tau^n)=\tilde \tau^n $ and $|S_{\tilde \tau^n}-S_{\chi^{n-1}(\tilde\tau^n)}|=0$, or we have $\tilde \tau^n \notin \pi_{n-1}$, which implies \eqref{eq:estimate K} as $\tilde \tau^n < \tau^{n-1}_{k'+1}$ and 
  \begin{equation*}
    |S_{\tilde \tau^n}-S_{\chi^{n-1}(\tilde\tau^n)}|\leq |S_{\tilde \tau^n}-D_{k'}^{n-1}| + |S_{\chi^{n-1}(\tilde\tau^n)} - D_{k'}^{n-1}|\leq 2^{-n+1}+2^{-n+1},
  \end{equation*}
  where $k'$ is such that $\chi^{n-1}(\tilde\tau^n) = \tau^{n-1}_{k'}$. Using~\eqref{eq:estimate K}, $|Z^n_{\tilde\tau^n}| \leq K$ and $|S_{\tilde \tau^n}| \leq K$, we estimate
  \begin{align*}
    |4 Z^n_{\tilde\tau^n} (S_{\tilde\tau^n} - S_{\chi^{n-1}(\tilde\tau^n)}) ( S_t - S_{\tilde\tau^n} )| 
    & \leq 4 K 2^{-n+2} (|S_t| + K) = 2^{-n+4} (K |S_t| + K^2), 
  \end{align*}
  which together with~\eqref{eq:trading strategy pr1} gives weak admissibility as claimed in~\eqref{eq:weak admissibility of L}.
\end{proof}

\begin{corollary}\label{cor:integral process}
  For typical price paths $\omega\in\Omega_\psi\subseteq D([0,T],\R)$ there exist an $N(\omega)\in \N$ such that 
  \begin{equation*}
    \mathcal{K}^n_{\gamma_K \wedge\sigma^n_K \wedge t} (\omega) < n^6 2^{-n},\quad t\in [0,T],
  \end{equation*}
  for all $n\geq N(\omega)$.
\end{corollary}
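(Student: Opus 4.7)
The plan is to prove, for each fixed $K\in\N$ and every sufficiently large $n$, that the event
\begin{equation*}
A^n_K := \bigg\{ \omega\in\Omega_\psi \,:\, \sup_{t\in[0,T]} \mathcal{K}^n_{\gamma_K \wedge \sigma^n_K \wedge t}(\omega) \geq n^6 2^{-n} \bigg\}
\end{equation*}
has outer measure $\oP(A^n_K)=O(n^{-6})$. Summability will allow the Borel--Cantelli lemma (Lemma~\ref{lem:Borel-Cantelli}) to conclude that typical price paths lie outside $\bigcap_m\bigcup_{n\geq m} A^n_K$, which yields the required $N(\omega)$.

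Write $c_n := n^4 2^{-2n} + 2^{-n+5}(K+\psi(K))^2$ and $\Lambda_n := n^6 2^{-n}$. Lemma~\ref{lem:trading strategy} produces $L^{K,n}\in\G_{c_n}$ with $c_n + (L^{K,n}\cdot S)_t = \mathcal{K}^n_{\gamma_K\wedge \sigma^n_K\wedge t}$. I would then stop the integral at the first crossing of the target,
\begin{equation*}
\nu_n := \inf\{ t\in[0,T] \,:\, c_n + (L^{K,n}\cdot S)_t \geq \Lambda_n \},
\end{equation*}
and set $\tilde L^n := L^{K,n}\1_{(0,\nu_n]}$. Since $L^{K,n}$ is frozen after $\rho_{c_n}(L^{K,n})$ at a value $\leq -c_n$, while the positive target $\Lambda_n-c_n$ must still be attained (valid once $c_n<\Lambda_n$), one gets $\nu_n<\rho_{c_n}(L^{K,n})$ on $A^n_K$. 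Therefore $\tilde L^n\in\G_{c_n}$, $\rho_{c_n}(\tilde L^n)=\infty$ on $A^n_K$, and right-continuity gives $(\tilde L^n\cdot S)_T=(L^{K,n}\cdot S)_{\nu_n}\geq \Lambda_n-c_n$ there. Setting $\lambda_n := c_n/(\Lambda_n-c_n)$, the rescaled strategy $G^n := \tilde L^n/(\Lambda_n-c_n)$ belongs to $\G_{\lambda_n}$ and satisfies $(G^n\cdot S)_T\geq 1$ on $A^n_K$.

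To convert the weakly admissible $G^n$ into a strongly admissible dominating strategy, I would mimic the proof of Lemma~\ref{lem:outer measure equivalence} with
\begin{equation*}
H^n := \lambda_n \1_{(0, \rho_{\lambda_n}(G^n) \wedge \gamma_K]} + \1_{[0, \gamma_K]} G^n,
\end{equation*}
which by the very same estimates as in that proof lies in $\Hc_{\lambda_n(1+3K+2\psi(K))}$. For $n$ large, $A^n_K$ automatically forces $|S_0|<K$ (else $\gamma_K=0$ and $\mathcal{K}^n_0=c_n<\Lambda_n$), hence $\nu_n\leq \gamma_K\wedge\sigma^n_K$, and the downward-jump bound yields $S_{\rho_{\lambda_n}(G^n)\wedge\gamma_K\wedge T}\geq -(K+\psi(K))$. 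Combining these with $(G^n\cdot S)_{\gamma_K\wedge T}\geq 1$ on $A^n_K$ and an elementary triangle inequality gives
\begin{equation*}
\lambda_n(1+3K+2\psi(K))+(H^n\cdot S)_T \geq \lambda_n(1+K+\psi(K))+1 \geq \1_{A^n_K},
\end{equation*}
so $\oP(A^n_K)\leq \lambda_n(1+3K+2\psi(K))$. Since $\lambda_n\sim 32(K+\psi(K))^2/n^6$, this closes the Borel--Cantelli step.

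The main delicate point is the interplay between stopping at $\nu_n$ and weak admissibility of $L^{K,n}$: one must verify that on $A^n_K$ the stopped and rescaled strategy $G^n$ never activates its weak-admissibility lower bound, so that the Lemma~\ref{lem:outer measure equivalence} construction really delivers the clean pointwise inequality $(G^n\cdot S)_T\geq 1$ on $A^n_K$ rather than a weaker $\oQ$-style statement with a possibly negative $\lambda_n\cdot S_{\rho_{\lambda_n}(G^n)}$ correction term. This hinges on the observation that the integral process $c_n+(L^{K,n}\cdot S)$ stays $\geq 0$ until it is frozen at value $\leq 0$ at $\rho_{c_n}(L^{K,n})$, forcing it to cross the strictly positive level $\Lambda_n$ strictly before that stopping time.
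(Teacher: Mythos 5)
Your proposal is correct, and its skeleton coincides with the paper's: fix $K$, use the weakly admissible strategy $L^{K,n}\in\G_{c_n}$ from Lemma~\ref{lem:trading strategy}, stop at the first time the process $\mathcal{K}^n_{\gamma_K\wedge\sigma^n_K\wedge\cdot}=c_n+(L^{K,n}\cdot S)$ reaches the threshold $n^62^{-n}$, rescale to get a Markov-type bound of order $c_n\cdot n^{-6}2^{n}$, and conclude by Borel--Cantelli. Where you genuinely deviate is in the passage from weak admissibility to an honest $\oP$-estimate. The paper intersects the bad event with $\{\sup_t|S_t|\le m\}$, reads off $\oQ(A_{n,m})\le n^{-6}2^nc_n$, and then invokes Lemma~\ref{lem:outer measure equivalence} to get $\oP(A_{n,m})\le(1+3m+2\psi(m))\oQ(A_{n,m})$, finishing with a union over $m$. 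You instead bound $\oP(A^n_K)$ directly for the unlocalized event by observing that on $A^n_K$ the stopped process $c_n+(L^{K,n}\cdot S)_{\cdot\wedge\nu_n}$ stays strictly above $0$ (since it must hit the strictly positive level $\Lambda_n$ before it can be frozen at a value $\le 0$ at $\rho_{c_n}(L^{K,n})$), so $\rho_{\lambda_n}(G^n)=\infty$ there, the $\oQ$-correction term never activates, and $\nu_n\le\gamma_K$ makes the truncation at $\gamma_K$ harmless; the admissibility computation from the proof of Lemma~\ref{lem:outer measure equivalence} then applies verbatim to make $H^n$ strongly admissible for all paths. This buys you a cleaner single Borel--Cantelli step with a constant depending only on $K$ rather than on the localization level $m$, at the price of re-verifying by hand that $\tilde L^n\in\G_{c_n}$ (which holds: off $A^n_K$ one has $\nu_n=\infty$ so $\tilde L^n=L^{K,n}$, and on $A^n_K$ the stopped integral never drops to $-c_n$). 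Both arguments are valid; yours is slightly longer but avoids the auxiliary set function's localization requirement.
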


\begin{proof}
  Consider the events
  \begin{equation*}
    A_{n,m} := \bigg\{\omega \in \Omega_\psi \,:\, \exists t \in [0,T] \text{ s.t. } 
    \mathcal{K}^n_{\gamma_K \wedge\sigma^n_K \wedge t} (\omega) \ge n^6 2^{-n} \text{ and } \sup_{t\in [0,T]}|S_t(\omega)|\leq m \bigg\}
  \end{equation*}
  for $n,m \in \N$. By the countable subadditivity of $\oP$ and the Borel-Cantelli lemma (see Lemma~\ref{lem:Borel-Cantelli}) the claim follows once we have shown that $\sum_n \oP(A_{n,m}) < \infty$ for every $m\in \N$. To that end, we define the stopping times
  \begin{equation*}
    \rho^n := \inf \left\{ t\in [0,T]\,:\,  \mathcal{K}^n_{\gamma_K \wedge\sigma^n_K \wedge t}\geq  n^6 2^{-n}\right\}, \quad n\in \N,
  \end{equation*}
  so that
  \begin{equation*}
    A_{n,m} = \Big\{\omega \in \Omega_\psi \,:\,  n^{-6} 2^n \mathcal{K}^n_{\gamma_K \wedge\sigma^n_K \wedge \rho^n \wedge T} (\omega) \geq 1 \text{ and } \sup_{t\in [0,T]}|S_t(\omega)|\leq m  \Big\}.
  \end{equation*}
  Now it follows directly from Lemma~\ref{lem:trading strategy} that
  \begin{equation*}
    \oQ(A_{n,m}) 
    \leq n^{-6} 2^n (n^4 2^{-2n} + 2^{-n+5} (K+\psi(K))^2) 
    = n^{-2} 2^{-n} + n^{-6}2^5 (K+\psi(K))^2,
  \end{equation*}
  which is summable. Since $\oP(A_{n,m}) \leq (1+ 3m+2\psi (m)) \oQ(A_{n,m})$ by Lemma~\ref{lem:outer measure equivalence}, the proof is complete.
\end{proof}

Finally, we have collected all necessary ingredients to prove the main result of this section, namely Theorem~\ref{thm:quadratic variation}. More precisely, we shall show that $(Q^n-Q^{n-1})_{n\in \N}$ is a Cauchy sequence. This implies Theorem~\ref{thm:quadratic variation} since the uniform metric on $D([0,T],\R_+)$ is complete.

\begin{proof}[Proof of Theorem~\ref{thm:quadratic variation}]
  For $K\in \N$ let us define  
  \begin{equation*}
    A_K :=\bigg\{ \omega \in \Omega_\psi \,:\, \sup_{t\in [0,T]}|S_t(\omega)|\leq K \text{ and } \sup_{t\in [0,T]}|Z^n_t(\omega)| \geq n^3 2^{-\frac{n}{2}} \text{ for infinitely many } n\in \N\bigg\}
  \end{equation*}
  and 
  \begin{align*}
    B :=\bigg\{ \omega \in \Omega_\psi \,:\,& \exists N(\omega)\in \N \text{ s.t. }\mathcal{K}^n_{\gamma_K \wedge\sigma^n_K \wedge t} (\omega) < n^6 2^{-n},\quad t\in [0,T],\\ 
    &\U_{T}(\omega,2^{-n})\leq n^2 2^{2n} \text{ and } \D_{T}(\omega,2^{-n})\leq n^2 2^{2n},\quad n\geq N(\omega)\bigg\}.
  \end{align*}
  Thanks to the countable subadditivity of $\overline{P}$ it is sufficient to show that $\overline{P}(A_K)=0$ for every $K\in \N$. Moreover, again by the subadditivity of $\overline{P}$ we see  
  \begin{equation*}
    \overline{P}(A_K)\leq  \overline{P}(A_K\cap B) + \overline{P}(A_K\cap B^c).
  \end{equation*}
  By Corollary~\ref{cor:crossings} and Corollary~\ref{cor:integral process} it is already known that $\overline{P}(A_K\cap B^c)=0$. In the following we show that $A_K\cap B=\emptyset$. 
  
  For this purpose, let us fix an $\omega \in B$ such that $\sup_{t\in [0,T]}|S_t(\omega)|\leq K$. Since $\omega \in B$ there exits an $N(\omega)\in \N$ such that for  all $m\ge N(\omega)$:
  \begin{enumerate}
    \item[(a)] The number of stopping times in $\pi_m$ does not exceed $2m^2 2^{2m}+2\leq 3m^2 2^{2m}$.
    \item[(b)] The number of stopping times in $\pi_m$ such that 
         \begin{equation*}
           \big|\Delta  S_{\tau^m_k}(\omega)\big|:=\bigg|S_{\tau^m_k}(\omega)-\lim_{s\to\tau^m_k,\,s<\tau^m_k} S_s(\omega)\bigg|\geq 2^{-m+1}, \quad\tau^m_k \in\pi_m, 
         \end{equation*}
         is less or equal to $2m^2 2^{2m}$.
  \end{enumerate}
  As $\sup_{t\in [0,T]}|S_t(\omega)|\leq K$, notice that $\gamma_K(\omega)=T$ and that for $t\in [0,T]$ we have
  \begin{align*}
    Z^n_{\tau^n_{k+1}\wedge t}(\omega) - &Z^n_{\tau^n_{k}\wedge t}(\omega)
    = \big(Q^n_{\tau^n_{k+1}\wedge t}(\omega) - Q^n_{\tau^n_{k}\wedge t}(\omega)\big)-\big(Q^{n-1}_{\tau^n_{k+1}\wedge t}(\omega) - Q^{n-1}_{\tau^n_{k}\wedge t}(\omega)\big)\\
    =& \big(S_{\tau^n_{k+1}\wedge t}(\omega) - S_{\tau^n_{k}\wedge t}(\omega)\big)^2\\
     &-\big(\big(S_{\tau^n_{k+1}\wedge t}(\omega) - S_{\chi^{n-1}(\tau^n_{k}\wedge t)}(\omega)\big)^2-\big(S_{\tau^n_{k}\wedge t}(\omega) - S_{\chi^{n-1}(\tau^n_{k}\wedge t)}(\omega)\big)^2\big)\\
    =& -2 \big(S_{\tau^n_{k}\wedge t}(\omega) - S_{\chi^{n-1}(\tau^n_{k}\wedge t)}(\omega)\big) \big(S_{\tau^n_{k+1}\wedge t}(\omega) - S_{\tau^n_{k}\wedge t}(\omega)\big),
  \end{align*}
  where we recall that $\chi^{n-1}(t) := \max\{\tau^{n-1}_{k'}\;:\;\tau^{n-1}_{k'}\le t\}$. Therefore, keeping \eqref{eq:estimate K} in mind, the infinite sum in \eqref{eq:integral process} can be estimated by
  \begin{align}\label{eq:infinite sum}
    \sum_{k=0}^{\infty} \bigg(Z^n_{\tau^n_{k+1}\wedge t}(\omega) - Z^n_{\tau^n_{k}\wedge t}(\omega) \bigg)^2  
    &= 4 \sum_{k=0}^{\infty} \big(S_{\tau^n_{k}\wedge t}(\omega) - S_{\chi^{n-1}(\tau^n_{k}\wedge t)}(\omega)\big)^2 \big(S_{\tau^n_{k+1}\wedge t}(\omega) - S_{\tau^n_{k}\wedge t}(\omega)\big)^2 \notag\\
    &\leq 2^{6-2n} \sum_{k=0}^{\infty} \big(S_{\tau^n_{k+1}\wedge t}(\omega) - S_{\tau^n_{k}\wedge t}(\omega)\big)^2.
  \end{align}
   
  For $n\ge N = N(\omega)$ and $t\in [0,T]$ we observe for the summands in \eqref{eq:infinite sum} the following bounds, which are similar to the bounds (A)-(E) in the proof of~\cite[Theorem~1]{Vovk2015}:
  \begin{enumerate}
    \item If $\tau^n_{k+1}\notin\pi_{n-1}$, then one has $\chi^{n-1}(\tau^n_{k+1}) = \chi^{n-1}(\tau^n_{k}) = \tau^{n-1}_{k'}$ for some $k'$ and thus
          \begin{equation*}
            \big| S_{\tau^n_{k+1}\wedge t}(\omega) - S_{\tau^n_{k}\wedge t}(\omega)\big| \leq |S_{\tau^n_{k+1}\wedge t}(\omega) -D_{k'}^{n-1}|+|S_{\tau^n_{k}\wedge t}(\omega) -D_{k'}^{n-1}| \leq 2^{2-n}.
	  \end{equation*}
          The number of such summands is at most $3n^2 2^{2n}$.
    \item If $\tau^n_{k+1}\in\pi_{n-1}$ and $|\Delta S_{\tau^n_{k+1}}| \leq 2^{-n+1}$, then one has
          \begin{equation*}
            \big| S_{\tau^n_{k+1}\wedge t}(\omega) - S_{\tau^n_{k}\wedge t}(\omega)\big| \leq 2^{1-n} + 2^{-n+1} = 2^{2-n}
          \end{equation*}
          and the number of such summands is at most $3n^2 2^{2n}$.
    \item If $\tau^n_{k+1}\in\pi_{n-1}$ and $|\Delta S_{\tau^n_{k+1}}| \in [2^{-m+1},2^{-m+2})$, for some $m\in\{N,N+1,\dots,n\}$ than one has that
          \begin{equation*}
            \big| S_{\tau^n_{k+1}\wedge t}(\omega) - S_{\tau^n_{k}\wedge t}(\omega)\big| \leq 2^{1-n} + 2^{-m+2}.
          \end{equation*}
          and the number of such summands is at most $2m^2 2^{2m}$.
    \item If $\tau^n_{k+1}\in\pi_{n-1}$ and $\Delta S_{\tau^n_{k+1}} \geq 2^{-N+2}$, then one has
          \begin{equation*}
            \big| S_{\tau^n_{k+1}\wedge t}(\omega) - S_{\tau^n_{k}(\omega)\wedge t}\big| \leq 2K
	  \end{equation*}
          and the number of such summand is bounded by a constant $C=C(\omega,K)$ independent of $n$.
  \end{enumerate}
  Using the bounds derived in (1)-(4), the estimate~\eqref{eq:infinite sum} can be continued by
  \begin{align*}
    \sum_{k=0}^{\infty} \bigg( Z^n_{\tau^n_{k+1}\wedge t}(\omega)& - Z^n_{\tau^n_{k}\wedge t}(\omega) \bigg)^2\\
    &\leq 2^{6-2n} \bigg( 6 n^2 2^{2n} 2^{4-2n}+ \sum_{m=N}^n 2m^2 2^{2m} (2^{1-n} + 2^{-m+2})^2+4 C K^2 \bigg),
  \end{align*}
  and thus there exists an $\tilde N = \tilde N(\omega) \in \N$ such that
  \begin{equation*}
    \sum_{k=0}^{\infty} \bigg( Z^n_{\tau^n_{k+1}\wedge t}(\omega) - Z^n_{\tau^n_{k}\wedge t}(\omega) \bigg)^2 \leq  2^{-2n} n^4,\quad t \in [0,T],
  \end{equation*}
  for all $n\geq \tilde N$. Combining the last estimate with the definition of $\mathcal{K}^n$ (cf.~\eqref{eq:integral process}), we obtain
  \begin{equation*}
    \mathcal{K}^n_{\sigma^n_K \wedge t}(\omega) \geq \big(Z^n_{\sigma^n_K\wedge t}(\omega)\big)^2, \quad t\in [0,T],
  \end{equation*}
  for all $n\geq \tilde N$. Moreover, by assumption on $\omega$ one has 
  \begin{equation*}
    \mathcal{K}^n_{\sigma^n_K \wedge t}(\omega) < n^6 2^{-n}, \quad t\in [0,T],
  \end{equation*}
  for all $n\geq N \vee \tilde N$. In particular, we conclude that $\sup_{t\in [0,T]}\big|Z^n_{\sigma^n_K\wedge \gamma_K\wedge t}(\omega)\big|<K$ whenever $n$ is large enough and thus
  \begin{equation*}
    n^6 2^{-n}> \big(Z^n_{t}(\omega)\big)^2, \quad t\in [0,T],
  \end{equation*}  
  for all sufficiently large $n$. Finally, we have $\sup_{t\in [0,T]}|Z^n_t(\omega)| < n^3 2^{-\frac{n}{2}}$ for all large $n$ and therefore $\omega\notin A_K\cap B$.
\end{proof}

\begin{remark}
  The existence of quadratic variation in the sense of Theorem~\ref{thm:quadratic variation} is equivalent to the existence of quadratic variation in the sense of F\"ollmer (see \cite[Proposition~3]{Vovk2015}). Therefore, Theorem~\ref{thm:quadratic variation} opens the door to apply F\"ollmer's pathwise It\^o formula~\cite{Follmer1979} to typical price paths belonging to the sample space~$\Omega_\psi$ and in particular to define the pathwise integral $\int f^\prime (S_s)\d S_s$ for $f\in C^{2}$ or for more general path-dependent functionals as shown by Cont and Fourni\'e~\cite{Cont2010a}, Imkeller and Pr\"omel~\cite{Imkeller2015}, and Ananova and Cont~\cite{ Ananova2017}.
\end{remark}

\subsection{Extension to multi-dimensional price paths}\label{subsec:multi-dimensional} 

In order to extend the existence of quadratic variation from one-dimensional to multi-dimensional typical price paths, we consider now the sample space $\Omega_\psi\subseteq D([0,T],\R^d)$ and introduce a $d$-dimensional version of the Lebesgue partitions for~$d\in \N$.

\begin{definition}\label{def:multi Lebesgue partition}
  For $n\in \N$ and a $d$-dimensional c\`adl\`ag function $\omega\colon [0,T]\to \R^d$ its \emph{Lebesgue partition} $\pi_n(\omega):=\{\tau^n_k (\omega)\,:\, k \ge 0\}$ is iteratively defined by $\tau^n_0(\omega):=0$  and
  \begin{equation*}
    \tau^n_k(\omega) := \min\bigg \{ \tau > \tau^n_{k-1}(\omega)\,:\, \tau \in \bigcup_{i=1}^d \pi_n(\omega^i)\cup  \bigcup_{i,j=1,i\neq j}^d \pi_n(\omega^i+\omega^j) \bigg\},\quad k\in \N,
  \end{equation*}
  where $\omega = (\omega^1,\dots,\omega^d)$ and $\pi_n(\omega^i)$ and $\pi_n(\omega^i +\omega^j)$ are the Lebesgue partitions of $\omega^i$ and $\omega^i+\omega^j$ as introduced in Definition~\ref{def:Lebesgue partition}, respectively.
\end{definition}

To state the existence of quadratic variation for typical price paths in $\Omega_\psi$, we define the canonical projection on $\Omega_\psi$ by $S^i_t(\omega):=\omega^i(t)$ for $\omega = (\omega^1,\dots, \omega^d)\in \Omega_\psi$, $t\in [0,T]$ and $i=1,\dots,d$.

\begin{corollary}\label{cor:quadratic variation}
  Let $d\in \N $ and $1\leq i,j\leq d$. For typical price paths $\omega\in \Omega_\psi$ the discrete quadratic variation 
  \begin{equation*}
    Q^{i,j,n}_t(\omega) := \sum_{k=1}^{\infty} \left( S^i_{\tau^n_k\wedge t}(\omega) - S^i_{\tau^n_{k-1}\wedge t}(\omega) \right) \left( S^j_{\tau^n_k\wedge t}(\omega) - S^j_{\tau^n_{k-1}\wedge t}(\omega) \right),\quad t \in [0,T],
  \end{equation*}
  converges along the Lebesgue partitions $(\pi_n(\omega))_{n\in \N}$ in the uniform metric to a function $[S^i,S^j](\omega) \in D([0,T],\R)$.
\end{corollary}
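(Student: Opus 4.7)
The plan is to reduce Corollary \ref{cor:quadratic variation} to the scalar Theorem \ref{thm:quadratic variation} by polarization. Summing the algebraic identity $2\Delta f \, \Delta g = (\Delta(f+g))^2 - (\Delta f)^2 - (\Delta g)^2$ along the $d$-dimensional Lebesgue partition $\pi_n(\omega)$ of Definition \ref{def:multi Lebesgue partition} yields
\begin{equation*}
  2 Q^{i,j,n}_t = \widetilde Q^{i+j,n}_t - \widetilde Q^{i,n}_t - \widetilde Q^{j,n}_t,
\end{equation*}
where $\widetilde Q^{f,n}_t := \sum_k (f_{\tau^n_k \wedge t} - f_{\tau^n_{k-1}\wedge t})^2$ denotes the discrete quadratic variation of a scalar function $f$ along $\pi_n(\omega)$. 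Since uniform convergence is stable under finite linear combinations, it suffices to prove uniform convergence of $\widetilde Q^{f,n}$ along $\pi_n(\omega)$ for each scalar $f$ in the finite family $\{\omega^i\}_{i=1}^d \cup \{\omega^i + \omega^j\}_{1 \leq i \neq j \leq d}$.

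Each such $f$ is c\`adl\`ag and, by the triangle inequality, satisfies $f(t-) - f(t) \leq 2 \psi(\sup_{s<t}|\omega(s)|)$, i.e.\ the one-dimensional jump restriction of Section \ref{sec:quadratic variation} with $\psi$ replaced by $2\psi$. The subtlety is that $\pi_n(\omega)$ is not the scalar Lebesgue partition $\pi_n(f)$ but, by Definition \ref{def:multi Lebesgue partition}, a refinement of it: $\pi_n(\omega) \supseteq \pi_n(f)$ for each such $f$. My plan is to re-run the proof of Theorem \ref{thm:quadratic variation} with $\pi_n(f)$ replaced by $\pi_n(\omega)$; all ingredients survive. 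Indeed, the upcrossing bounds of Lemma \ref{lem:Doob upcrossing} and Corollary \ref{cor:crossings} are pathwise statements about $f$ alone, independent of any partition; the size estimates (1)--(4) for the summands in \eqref{eq:infinite sum} depend only on the fact that between consecutive partition times the path $f$ does not cross a new dyadic level of $\mathbb{D}^n$, a property preserved by refinement since $\pi_n(\omega) \supseteq \pi_n(f)$; and the integral process $\mathcal{K}^n$ of \eqref{eq:integral process}, the stopping times $\sigma^n_K$ of \eqref{eq:sigma-n}, and the weakly admissible strategy $L^{K,n}$ of Lemma \ref{lem:trading strategy} can be redefined on the refined partition (with $\chi^{n-1}$ now referring to the largest point of $\pi_{n-1}(\omega)$ below the given time) while still satisfying the conclusion of Corollary \ref{cor:integral process}, possibly with slightly enlarged constants.

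Granted this, for typical $\omega \in \Omega_\psi$ the uniform limits $[S^i]$ and $[S^i + S^j]$ exist simultaneously for all $1 \leq i,j \leq d$ by countable subadditivity of $\oP$, and polarization produces
\begin{equation*}
  [S^i, S^j] := \tfrac{1}{2}\bigl([S^i + S^j] - [S^i] - [S^j]\bigr) \in D([0,T], \R)
\end{equation*}
as the uniform limit of $Q^{i,j,n}$. The main obstacle is the bookkeeping needed to certify the refined-partition version of Corollary \ref{cor:integral process}: the extra partition points contributed by the other coordinates $\omega^{i'}$ ($i' \neq i$) and by sums $\omega^{i'} + \omega^{j'}$ can increase the number of summands in \eqref{eq:infinite sum}, but these additional points only further subdivide intervals on which $f$ is already confined to a single dyadic strip of width $2^{-n}$, so every new summand is of the favourable order $(2^{-n})^2$ and the polynomial-in-$n$ counts inherited from Corollary \ref{cor:crossings} suffice to control their total contribution.
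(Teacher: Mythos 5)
Your proposal is correct and follows essentially the same route as the paper: the polarization identity reduces the claim to the scalar quadratic variations of $S^i$ and $S^i+S^j$, and the scalar proof of Theorem~\ref{thm:quadratic variation} is then re-run along the refined $d$-dimensional Lebesgue partition, with the counting bounds (a)--(b) and (1)--(4) only changing by a constant depending on $d$. Your discussion of why the refinement is harmless (extra partition points only subdivide intervals confined to a single dyadic strip) is exactly the point the paper leaves implicit.
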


\begin{proof}
  To show the convergence of $Q^{i,j,n}_\cdot(\omega)$ for a path $\omega \in \Omega_\psi$, we observe that
  \begin{align*}
    S^i_{s,t}(\omega) S^j_{s,t}(\omega)=\frac{1}{2} \left(\left((S^i_t(\omega) + S^j_t(\omega))-(S^i_s(\omega)+S^j_s(\omega))\right)^2 -(S^i_{s,t}(\omega))^2-(S^j_{s,t}(\omega))^2  \right)
  \end{align*}
  for $s,t\in[0,T]$ and thus it is sufficient to prove the existence of the quadratic variation of $S^i(\omega)$ and $S^i(\omega) + S^j(\omega)$ for $1\leq i,j\leq d$ with $i\neq j$. For typical price paths this can be done precisely as in the proof of Theorem~\ref{thm:quadratic variation} with the only exception that the bounds (a)-(b) and (1)-(4) change by a multiplicative constant depending only on the dimension $d$.
\end{proof}

\section{Model-free It\^o integration}\label{sec:ito integration}

The key problem of ``stochastic'' integration with respect to typical price paths is, unsurprisingly, that they are not of bounded variation and that there exists no reference probability measure in the present probability-free setting.

The model-free It\^o integral provided in this section is a pathwise construction and comes with two natural interpretations in financial mathematics: in the case of existence the integral has an interpretation as the capital process of an adapted trading strategy and the set of price paths where the integral does not exists allows for a model-free arbitrage opportunity of the first kind (cf. Proposition~\ref{prop:na1 interpretation}).

Roughly speaking, the construction of the model-free It\^o integral is based on the existence of the quadratic variation for typical price paths and on an application of the pathwise Hoeffding inequality due to Vovk~\cite{Vovk2012} in the case of continuous price trajectories, and of the pathwise Burkholder-Davis-Gundy inequality  due to Beiglb\"ock and Siorpaes~\cite{Beiglbock2015} in the case of price paths with jumps. As in the classical approach to stochastic integration, we first define the It\^o integral for simple integrands and extend it via an approximation scheme to a larger class of integrands. Our simple integrands are the step functions:\smallskip

A process $F\colon \Omega_\psi \times [0,T] \to \R^d$ is called \emph{step function} if $F$ is given by  
\begin{equation}\label{eq:step function}
  F_t :=F_0 \1_{\{0\}}(t)+ \sum_{i=0}^{\infty} F_{\sigma_i} \1_{(\sigma_i,\sigma_{i+1}]}(t),\quad t\in [0,T],
\end{equation}
where $(\sigma_i)_{i\in\N_0}$ is an increasing sequence of stopping times such that for each $\omega \in \Omega_\psi$ there exists an $N(\omega)\in\N$ with $\sigma_i(\omega)=\sigma_{i+1}(\omega)$ for all $i\geq N(\omega)$, $F_0\in \R^d$ and $F_{\sigma_i}\colon \Omega_\psi \to \R^d$ is $\mathcal{F}_{\sigma_i}$-measurable. For such a step function $F$ the corresponding integral process~$(F \cdot S)_t$ is well-defined for all~$(t,\omega) \in [0,T]\times \Omega_\psi$ and we recall that
\begin{equation}\label{eq:step function integral}
  (F \cdot S)_t := \sum_{i=0}^{\infty} F_{\sigma_i} S_{\sigma_i\wedge t,\sigma_{i+1}\wedge t},\quad t\in [0,T],
\end{equation}
and $ S_{\sigma_i\wedge t,\sigma_{i+1}\wedge t} :=  S_{\sigma_{i+1}\wedge t}- S_{\sigma_{i+1}\wedge t}$. Throughout the whole section we denote by $(\pi_n(\omega))_{n\in \N}$ the sequence of Lebesgue partitions consisting of the stopping times $(\tau^n_k(\omega))_{k\in \N}$ as introduced in Definition~\ref{def:multi Lebesgue partition} for $\omega\in \Omega_\psi$, and the quadratic variation matrix of $\omega$ along $(\pi_n(\omega))_{n\in \N}$  is given by
\begin{equation*}
  [S]_t (\omega):= \big([S^i,S^j]_t(\omega)\big)_{1\leq i,j\leq d},\quad t\in [0,T],
\end{equation*}
where $S^i_t(\omega):= \omega^i(t)$ for $\omega=(\omega^1,\dots,\omega^d)$ and we refer to Corollary~\ref{cor:quadratic variation} for the definition of $[S^i,S^j]_t(\omega)$. Recall that if the quadratic variation exists as a uniform limit, then it exists also in the sense of F\"ollmer along the same sequence of partitions (see \cite[Proposition~3]{Vovk2015}). Hence, one gets 
\begin{equation*}
  \int_0^t F^{\otimes 2}_s \d [S]_s := \sum_{i,j=1}^d \int_0^t F^i_s F^j_s \d [S^i, S^j]_s 
  := \liminf_{n\to \infty }\sum_{k=0}^{\infty} \sum_{i,j=1}^d F_{\tau_k^n}^iF_{\tau_k^n}^j S^i_{\tau_k^n\wedge t,\tau_{k+1}^n\wedge t}S^j_{\tau_k^n\wedge t,\tau_{k+1}^n\wedge t}, 
\end{equation*}
for $t \in [0,T]$, is actually a true limit for typical price paths. 

\begin{remark}
  The existence of the quadratic variation along the Lebesgue partitions is ensured for typical price paths belonging to the path space~$\Omega_\psi$ by Corollary~\ref{cor:quadratic variation}. For some special cases of~$\Omega_\psi$, such as the space $\Omega_c$ of continuous paths or the space $\tilde \Omega_\psi$ of c\`adl\`ag paths with mildly restricted jumps, the construction of the quadratic variation can be also found in \cite[Lemma~8.1]{Vovk2012} resp. \cite[Theorem~2]{Vovk2015}, cf. Example~\ref{ex:sample spaces}.
\end{remark}

In the following we identify two processes $X,Y\colon \Omega_\psi \times [0,T]\to \R^d$ if for typical price paths we have $X_t = Y_t$ for all $t\in [0,T]$, and we write $\overline L_0(\R^d)$ for the resulting space of equivalence classes which is equipped with the distance
\begin{equation*}
  d_\infty(X,Y) := \overline E[\|X-Y\|_\infty \wedge 1],
\end{equation*}
where $\lVert f \rVert_\infty := \sup_{t\in [0,T]}|f(t)|$ for $f \colon [0,T] \to \R^d$ denotes the supremum norm and $\overline E$ denotes an expectation operator defined for $Z\colon\Omega_\psi \rightarrow [0, \infty]$ by 
\begin{equation*}
  \overline E[Z] := \inf \left\{\lambda > 0\,: \,\exists (H^n)_{n\in \N} \subseteq \mathcal{H}_{\lambda} \text{ s.t. } \forall \omega \in \Omega_\psi\,: \, \liminf_{n \to\infty} (\lambda + (H^n\cdot S)_T(\omega)) \ge Z(\omega) \right\}.
\end{equation*}
As in~\cite[Lemma~2.11]{Perkowski2016} it can be shown that $(\overline L_0(\R^d), d_\infty)$ is a complete metric space and $(\overline{\mathcal{D}}(\R^d), d_\infty)$ is a closed subspace, where $\overline{\mathcal{D}}(\R^d)$ are those processes in $\overline L_0(\R^d)$ which have a c\`adl\`ag representative.\smallskip

The main results about model-free It\^o integration are summarized in the following theorem.

\begin{theorem}\label{thm:integral}
  There exists two metric spaces $(\overline{H}_1, d_{\overline{H}_1})$ and $(\overline{H}_2, d_{\overline{H}_2})$ such that the (equivalence classes of) step functions are dense in $\overline{H}_1$, $\overline{H}_2$ embedds into $\overline{\mathcal{D}}(\R^d)$ and the integral map $I \colon F \mapsto (F \cdot S)$, defined for step functions in~\eqref{eq:step function integral}, has a continuous extension that maps $(\overline{H}_1, d_{\overline{H}_1})$ to $(\overline{H}_2, d_{\overline{H}_2})$. Moreover, one has the following continuity estimates:
  \begin{enumerate}
    \item[(i)] For $\Omega_\psi= \Omega_c$ the integral map $I$ satisfies~\eqref{eq:continuity continuous 1} and~\eqref{eq:continuity continuous 2} thus one can define $d_{\overline{H}_1}=d_{\mathrm{QV}}$ (which is defined by formula \eqref{eq:distance continuous 1}) and $d_{\overline{H}_2}=d_{\infty}$  or $d_{\overline{H}_1}=d_{\mathrm{QV}, loc}$ and  $d_{\overline{H}_2}=d_{\infty, loc}$ (which are defined in \eqref{eq:distance continuous 2}).    
    \item[(ii)] For general $\Omega_\psi$ the integral map $I$ satisfies~\eqref{eq:continuity cadlag} and one can define $d_{\overline{H}_1}=d_{\infty}$ and $d_{\overline{H}_2}=d_{\infty,\psi}$ (defined in~\eqref{eq:distance cadlag}). 
  \end{enumerate}
\end{theorem}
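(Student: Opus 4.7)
The plan is to follow the classical extension-by-continuity strategy: reduce the statement to sharp pathwise continuity estimates for integrals of step functions, then use completion of metric spaces to handle general integrands.

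\textbf{Step 1: Pathwise continuity estimates for step-function integrands.} This is the heart of the matter and is carried out in the two lemmas cited in the statement (Lemma~\ref{lem:model free ito cont} and Lemma~\ref{lem: ito inequality cadlag}). For case (i), $\Omega_\psi=\Omega_c$, I would show that for step functions $F$, $G$ one has
\[
\overline{P}\Big(\sup_{t\in[0,T]}\lvert((F-G)\cdot S)_t\rvert > \alpha\Big)\leq C(\alpha)\,\overline{E}\Big[\int_0^T (F_s-G_s)^{\otimes 2}\d[S]_s\wedge 1\Big],
\]
which, together with the existence of $\int F^{\otimes 2}\d[S]$ as an honest limit along the Lebesgue partitions (Corollary~\ref{cor:quadratic variation}), yields the two estimates \eqref{eq:continuity continuous 1} and \eqref{eq:continuity continuous 2}. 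The engine here is Vovk's pathwise Hoeffding inequality applied to the martingale-transform structure of $(F\cdot S)$, which provides an explicit superhedge for the event of deviation. For case (ii), with unrestricted upward jumps the Hoeffding bound becomes useless, and I would instead use the pathwise Burkholder--Davis--Gundy inequality of Beiglb\"ock--Siorpaes to build a superhedge producing a $d_{\infty,\psi}$-bound driven by the uniform norm of the integrand; this gives \eqref{eq:continuity cadlag}.

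\textbf{Step 2: Construction of the metric spaces.} In each case I would define $\overline{H}_1$ as the abstract metric completion of the set of equivalence classes of step functions under $d_{\overline{H}_1}$, so that density of step functions is built in by construction. For $\overline{H}_2$, I would rely on the fact, proved exactly as in~\cite[Lemma~2.11]{Perkowski2016}, that $(\overline{L}_0(\R^d),d_\infty)$ is complete with $\overline{\mathcal{D}}(\R^d)$ as a closed subspace, and I would show the analogous completeness for the variants $d_{\infty,\mathrm{loc}}$ and $d_{\infty,\psi}$; then $\overline{H}_2$ is taken to be the closure in $(\overline{\mathcal{D}}(\R^d),d_{\overline{H}_2})$ of the image of step functions under $I$. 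This immediately gives the required embedding $\overline{H}_2\hookrightarrow\overline{\mathcal{D}}(\R^d)$.

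\textbf{Step 3: Continuous extension.} The estimates of Step 1 show that $I$ is uniformly continuous from the dense subset of step functions into $(\overline{H}_2,d_{\overline{H}_2})$. Standard metric-space theory then produces a unique continuous extension $I\colon \overline{H}_1\to\overline{H}_2$: for $F=\lim_n F^n$ in $\overline{H}_1$ with $(F^n)$ step functions, the sequence $(F^n\cdot S)$ is Cauchy in $\overline{H}_2$, and $(F\cdot S):=\lim_n (F^n\cdot S)$ is independent of the approximating sequence and automatically lies in $\overline{\mathcal{D}}(\R^d)$ by closedness.

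The main obstacle is clearly Step 1 in the general c\`adl\`ag case. In the continuous case, path fluctuations between stopping times of the Lebesgue partition are controlled uniformly by $2^{-n}$, which feeds directly into Hoeffding's bound; with jumps this control is lost, since a single large upward jump can make $(F\cdot S)$ deviate by an arbitrary amount in one step. Producing an explicit superhedging strategy that absorbs such deviations, without any probabilistic cancellation available, is the delicate point and is precisely where the pathwise Beiglb\"ock--Siorpaes BDG inequality is needed; this is also why the continuity estimate obtained in case (ii) is genuinely weaker than the one in case (i), and why the distinction between $d_\infty$ and $d_{\infty,\psi}$ cannot be avoided.
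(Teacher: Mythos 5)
Your proposal is correct and follows essentially the same route as the paper: Lemma~\ref{lem:model free ito cont} (via Vovk's pathwise Hoeffding inequality) and Lemma~\ref{lem: ito inequality cadlag} (via the Beiglb\"ock--Siorpaes pathwise BDG inequality) supply exactly the step-function continuity estimates you describe, and the extension is then the standard density/completion argument with $\overline{H}_1$ the closure of step functions and $\overline{H}_2$ sitting inside the complete space $(\overline{\mathcal{D}}(\R^d), d_{\overline{H}_2})$. The only cosmetic difference is that the paper's lemmas are stated as concentration bounds on intersected events (e.g.\ with $\{\int_0^T F^{\otimes 2}\d[S]\le b\}$) and the bounds \eqref{eq:continuity continuous 1}, \eqref{eq:continuity continuous 2}, \eqref{eq:continuity cadlag} are then obtained by splitting on the size of $\int (F-G)^{\otimes 2}\d[S]$ (resp.\ $\|F-G\|_\infty$) and optimizing the thresholds, rather than from a single inequality of the form you display.
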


Let us briefly comment on the spaces of integrands covered by the model-free It\^o integral of Theorem~\ref{thm:integral}. \medskip

\begin{remark}~
  \begin{enumerate}
    \item It is easy to verify that the metric space~$(\overline{H}_1, d_{\overline{H}_1})$ can be chosen to contain the left-continuous versions of adapted c\`adl\`ag processes, cf. \cite[Theorem~3]{Karandikar1995} and \cite[Theorem~3.5]{Perkowski2016}.
    \item If we replace the filtration $\mathcal{F}_t$ by its right-continuous version, we can define $(\overline{H}_1, d_{\overline{H}_1})$ such that it contains at least the c\`agl\`ad adapted processes and, furthermore, such that if $(F_n) \subset \overline{H}_1$ is a sequence with $\sup_{\omega \in \Omega_\psi} \| F_n(\omega) - F(\omega) \|_\infty \to 0$, then $F\in \overline{H}_1$ and there exists a subsequence $(F_{n_k})$ with 
    \begin{equation*}
      \lim_{k \to \infty} \| (F_{n_k} \cdot S)(\omega) - (F \cdot S)(\omega) \|_\infty = 0
    \end{equation*}
    for typical price paths $\omega\in \Omega_\psi$. 
  \end{enumerate}
  In both cases we can take $(\overline{H}_1, d_{\overline{H}_1})$ as the closure of step functions with respect to~$d_{\infty}$. 
\end{remark}

\begin{remark} \label{q_var_part_ind}
  In a recent work~\cite{Vovk2016} Vovk introduces a related (but less systematic) approach to define model-free It\^o integrals. He obtains the convergence of non-anticipating Riemann sums along a suitably chosen sequence of partitions and the limit is interpreted as a model-free It\^o integral. However, for this construction no continuity estimates are given and (a priori) the limit might depend on the chosen sequence of partitions. Furthermore, Vovk works with different techniques and on a more restrictive sample space, compared to the present work. 
  
  Thanks to the continuity estimates, our integral $(F\cdot S)$ is independent of the approximating sequence $(F^n \cdot S)$ of step functions $F^n$. However, to define the quadratic variation we work with the sequence of Lebesgue partitions. But it follows from Theorem~\ref{thm:integral} together with It\^o's formula that the quadratic variation along any sequence of partitions of stopping times $(\pi_n = \{\tau^n_k: k \in \N_0\})_{n \in \N}$ for which $S^n = \sum_{k=0}^\infty S_{\tau^n_k} \1_{(\tau^n_k, \tau^n_{k+1}]}$ converges to $S$ in $d_{\infty}$ agrees with $[S]$ for typical price paths.
\end{remark}

To prove Theorem~\ref{thm:integral}, we will derive in the following subsections suitable continuity estimates for the integrals of step functions. Given these continuity estimates, the proof of Theorem~\ref{thm:integral} follows directly by approximating general integrands by step functions.

\subsection{Integration for continuous paths}\label{subsec:integration continuous}

In this subsection we focus on the sample space $\Omega_c :=C([0,T],\R^d)$ consisting of continuous paths $\omega\colon [0,T]\to \R^d$.

We recover essentially the results of~\cite[Theorem~3.5]{Perkowski2016} and are able to construct our integral for c\`agl\`ad adapted integrands. However, in~\cite{Perkowski2016} we worked with the uniform topology on the space of integrands while here we are able to strengthen our results and to replace the uniform distance with a rather natural distance that depends only on the integral of the squared integrand against the quadratic variation. We are able to show that the closure of the step functions in this new distance contains the c\`agl\`ad adapted processes. However, in principle this closure might contain a wider class of integrands.\smallskip

The main ingredient in our construction is the following continuity estimate for the pathwise stochastic integral of a step function. It is based on Vovk's pathwise Hoeffding inequality.

\begin{lemma}[Model-free concentration of measure, continuous version]\label{lem:model free ito cont}
  Let $F\colon\Omega_c\times [0,T]\to \R^d $ be a step function. Then we have for all $a,b> 0$
  \begin{equation*}
    \oP\left(\big\{\lVert (F\cdot S)\rVert_\infty \ge a \sqrt{b} \big\} \cap \Big\{ \int_0^T F_s^{\otimes 2} \d [ S]_s \le b \Big\}\right) \le 2\exp(-a^2/2).
  \end{equation*}
\end{lemma}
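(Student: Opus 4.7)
The plan is to derive the concentration inequality by a pathwise version of the standard exponential-Chebyshev argument, namely by constructing, for each $\lambda > 0$, a $1$-admissible superhedge of the exponential process
\[
E^\lambda_t := \exp\!\Big(\lambda\,(F\cdot S)_t - \tfrac{\lambda^2}{2}\int_0^t F_s^{\otimes 2}\,\d[S]_s\Big),
\]
and then stopping it at the first exit time of $\pm(F\cdot S)$ from the half-line $(-a\sqrt{b},\infty)$ respectively $(-\infty,a\sqrt{b})$.

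First, I would obtain the exponential superhedge by applying Vovk's pathwise Hoeffding inequality (\cite[Theorem~A.1]{Vovk2012}) to the Riemann-sum approximations
\[
(F\cdot S)^{(m)}_t := \sum_{k} F_{\tau^m_k}\cdot S_{\tau^m_k\wedge t,\,\tau^m_{k+1}\wedge t}
\]
along the Lebesgue partitions $(\pi_m(\omega))_{m\in \N}$ of Definition~\ref{def:multi Lebesgue partition}. Since $F$ is a step function, only finitely many stopping times of $F$ contribute and, along $\pi_m$, the bets $F_{\tau^m_k}\cdot S_{\tau^m_k,\tau^m_{k+1}}$ take values in bounded predictable intervals whose endpoints shrink as $m\to \infty$ (by continuity of $S$). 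Vovk's pathwise Hoeffding then yields, for each fixed $\lambda > 0$ and each $m\in \N$, a simple strategy $H^{m,\lambda}\in \Hc_1$ such that
\[
1 + (H^{m,\lambda}\cdot S)_t \;\ge\; \exp\!\Big(\lambda\,(F\cdot S)^{(m)}_t - \tfrac{\lambda^2}{2}\sum_k F_{\tau^m_k}^{\otimes 2}\,\bigl(S_{\tau^m_k\wedge t,\,\tau^m_{k+1}\wedge t}\bigr)^{\otimes 2}\Big),
\qquad t\in[0,T],\ \omega\in\Omega_c.
\]
Continuity of $S$ forces $(F\cdot S)^{(m)}_t\to (F\cdot S)_t$ uniformly in $t$, and by Corollary~\ref{cor:quadratic variation} the discrete quadratic-variation sums converge to $\int_0^t F_s^{\otimes 2}\d[S]_s$ for typical price paths. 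A $\liminf_m$ then transfers the bound to $E^\lambda_t$ on a set of full $\oP$-measure, which is all that is needed.

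Next, I would run the Chebyshev step. Put
\[
\tau^+ := \inf\{t\in[0,T]\,:\,(F\cdot S)_t\ge a\sqrt{b}\}\wedge T
\]
and set $A_+ := \{\sup_{t}(F\cdot S)_t\ge a\sqrt{b}\}\cap\bigl\{\int_0^T F_s^{\otimes 2}\d[S]_s\le b\bigr\}$. Because $F$ is a step function and $S$ is continuous, $(F\cdot S)$ is continuous, hence on $A_+$ one has $(F\cdot S)_{\tau^+}= a\sqrt{b}$ and $\int_0^{\tau^+}F^{\otimes 2}\d[S]\le b$, so
\[
E^\lambda_{\tau^+} \;\ge\; \exp\!\bigl(\lambda a\sqrt{b}-\tfrac{\lambda^2}{2}b\bigr)\cdot\mathbf{1}_{A_+}.
\]
Choosing $\lambda = a/\sqrt{b}$ and stopping the strategy from the previous paragraph at $\tau^+$ (which preserves $1$-admissibility), the pathwise inequality becomes $1+(\tilde H^\lambda\cdot S)_T\ge \exp(a^2/2)\mathbf{1}_{A_+}$, so that the rescaled strategy $\mathrm e^{-a^2/2}\tilde H^\lambda$ superhedges $\mathbf{1}_{A_+}$ from the initial capital $\mathrm e^{-a^2/2}$, giving $\oP(A_+)\le \mathrm e^{-a^2/2}$. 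Replacing $F$ by $-F$ yields the same bound for $A_- := \{\inf_t (F\cdot S)_t\le -a\sqrt b\}\cap\{\int_0^T F^{\otimes 2}d[S]\le b\}$. Since $\{\|(F\cdot S)\|_\infty\ge a\sqrt b\}\subseteq A_+\cup A_-$, countable subadditivity of $\oP$ delivers the factor $2$.

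The main technical hurdle is the first paragraph: transferring Vovk's discrete pathwise Hoeffding bound, whose exponent involves $\sum_k F_{\tau^m_k}^{\otimes 2}(S_{\tau^m_k,\tau^m_{k+1}})^{\otimes 2}$, into the continuous quantity $\int_0^t F_s^{\otimes 2}\d[S]_s$ uniformly in $t$. This requires both the continuity of $S$ (so that refining the partition makes all individual bets small enough for the Hoeffding constants to behave well) and the uniform convergence of discrete quadratic variations from Corollary~\ref{cor:quadratic variation}; it is precisely this ingredient that fails in the c\`adl\`ag setting, explaining why Lemma~\ref{lem: ito inequality cadlag} has to be proved by a completely different, BDG-based argument.
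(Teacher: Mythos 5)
Your second paragraph (the stopping/Chebyshev step and the $\pm F$ symmetrization) is fine and essentially matches the paper. The problem is in the first step, which you yourself flag as the main hurdle: Vovk's pathwise Hoeffding inequality does \emph{not} produce a compensator equal to the realized weighted quadratic variation $\sum_k F_{\tau^m_k}^{\otimes 2}\,(S_{\tau^m_k\wedge t,\tau^m_{k+1}\wedge t})^{\otimes 2}$, as written in your display. What the theorem gives is an exponent of the form $\lambda\,(F\cdot S)^{(m)}_t - \frac{\lambda^2}{2}\sum_k c_k^2$, where $c_k$ is a \emph{predictable bound} on the $k$-th bet, known at time $\tau^m_k$. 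Along the Lebesgue partition of $S$ the best predictable bound on the bet $F_{\tau^m_k}\cdot S_{\tau^m_k,\tau^m_{k+1}}$ is of order $|F_{\tau^m_k}|\cdot 2^{-m}$, and the resulting compensator $\sum_k |F_{\tau^m_k}|^2\,2^{-2m}\1_{\{\tau^m_k\le t\}}$ converges (for typical paths) to a quantity built from the level-crossing counts of the components $S^i$ and the sums $S^i+S^j$ — essentially $\int_0^t |F_s|^2\,\mathrm{d}(\text{combined counting measure})$ — which strictly dominates $\int_0^t F_s^{\otimes 2}\,\mathrm{d}[S]_s$ in general (already in $d=1$ you lose a factor of about $4$ because the predictable bound is twice the realized increment; for $d\ge 2$ the limit is not even comparable to $\int F^{\otimes 2}\,\mathrm{d}[S]$). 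Consequently your argument only yields a concentration bound conditioned on a larger quantity being $\le b$, or equivalently a weaker exponent, and the stated inequality does not follow.

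The missing idea, which is exactly how the paper proceeds, is to refine the partition using the level crossings of the integral process itself: one introduces $\zeta^n_{k+1}:=\inf\{t\ge\zeta^n_k:|(F\cdot S)_{\zeta^n_k,t}|=2^{-n}\}$ and takes $\rho^n$ to be the union of $(\zeta^{2n}_k)$, $(\tau^n_k)$ and the jump times $(\sigma_i)$ of $F$. Then every bet is predictably bounded by $2^{-2n}$, Hoeffding applies with compensator $\frac{\lambda^2}{2}\sum_k 2^{-4n}\1_{\{\rho^n_k\le t\}}$, and the point of the proof is that $2^{-4n}\cdot\#\{k:\rho^n_k\le t\}$ is, up to boundary corrections coming from the $\tau^n_k$ and $\sigma_i$, the realized quadratic variation of $(F\cdot S)$ along $\rho^n$, which converges to $\int_0^t F_s^{\otimes 2}\,\mathrm{d}[S]_s$ for typical price paths. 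Without stopping on the level sets of $(F\cdot S)$ there is no way to make the sum of squared \emph{predictable} bounds collapse onto $\int F^{\otimes 2}\,\mathrm{d}[S]$, so your proof as written has a genuine gap at its central step.
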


\begin{proof}
  Let $F_t = F_0 \1_{0}(t) + \sum_{m=0}^\infty F_m \1_{(\sigma_m,\sigma_{m+1}]}(t)$. For $n \in \N$ we define the stopping times
  \begin{equation*}
    \zeta^n_0:=0, \qquad \zeta^n_{k+1} := \inf\{ t \ge \zeta^n_k: | (F \cdot S)_{\zeta^n_k, t}| = 2^{-n} \},
  \end{equation*}
  and also $\tau^n_0:=0$, $\tau^n_{k+1} := \inf\{ t \ge \tau^n_k : |S_{\tau^n_k,t}| = 2^{-n}\}$. We then write
  \begin{equation*}
    \rho^n_0 := 0, \qquad \rho^n_{k+1} := \inf\{ t > \rho^n_k: t = \zeta^{2n}_i \text{ or } t = \tau^n_i \text{ or } t = \sigma_i \text{ for some } i \ge 0 \}
  \end{equation*}
  for the union of the $(\zeta^{2n}_k)_k$ and $(\tau^n_k)_k$ and $(\sigma_m)_m$.  By definition of the times $(\rho^n_k)$ we have
  \begin{equation*}
    \sup_{t\in [0,T]}\big| (F \cdot S)_{\rho^n_k\wedge t, \rho^n_{k+1} \wedge t} \big| \le 2^{-2n}
  \end{equation*}
  and $F$ is constant on $(\rho^n_k,\rho^n_{k+1}]$ for all $k$, and therefore Vovk's pathwise Hoeffding inequality, \cite[Theorem~A.1]{Vovk2012} or \cite[Lemma~A.1]{Perkowski2016}, gives us for every $\lambda \in \R$ a strongly 1-admissible simple strategy $H^{\lambda,n} \in \Hc_{1}$ such that
  \begin{equation}\label{eq:model free ito cont pr1}
    1 + (H^{\lambda,n} \cdot S)_t \ge \exp\bigg( \lambda (F\cdot S)_{t} - \frac{\lambda^2}{2} \sum_{k=0}^\infty 2^{-4n} \1_{\{\rho^n_k \le t\}} \bigg) =: \mathcal{E}^{\lambda,n}_{t}, \qquad t \in [0,T].
  \end{equation}
  Next, observe that for all $i =1 , \dots, d$
  \begin{equation*}
    \sup_{t \in [0,T]} \Big| \sum_{k=0}^\infty S^i_{\rho^n_k} \1_{[\rho^n_k,\rho^n_{k+1})}(t) - S^i_t \Big| \le 2^{-n},
  \end{equation*}
  so since $2^{-n}$ decays faster than logarithmically,~\cite[Corollary~3.6]{Perkowski2016} shows that for typical price paths we have for $i,j =1, \dots, d$
  \begin{equation*}
    \lim_{n \to \infty} \sup_{t \in [0,T]} \Big|\sum_{k=0}^\infty S^i_{\rho^n_k \wedge t, \rho^n_{k+1}\wedge t} S^j_{\rho^n_k \wedge t, \rho^n_{k+1}\wedge t} - [S^i, S^j]_t \Big| = 0,
  \end{equation*}
  and using that $F(\omega)$ is piecewise constant for all $\omega \in \Omega_c$, we also get
  \begin{equation}\label{eq:model free ito cont pr2}
    \lim_{n \to \infty} \sup_{t \in [0,T]} \Big|\sum_{k=0}^\infty F^i_{\rho^n_{k+}} F^j_{\rho^n_{k+}} S^i_{\rho^n_k \wedge t, \rho^n_{k+1}\wedge t} S^j_{\rho^n_k \wedge t, \rho^n_{k+1}\wedge t} - \int_0^t F^i_s F^j_s \d [S^i, S^j]_s \Big| = 0
  \end{equation}
  for typical price paths, where $F^i_{\rho^n_{k+}}$ is simply the value that $F^i$ attains on $(\rho^n_k, \rho^n_{k+1}]$. We proceed by estimating for $k \ge 0$
  \begin{equation*}
    2^{-2n} \le \big| (F \cdot S)_{\rho^n_k, \rho^n_{k+1}} \big| + 2^{-2n} \1_{\{\rho^n_k \text{ or } \rho^n_{k+1} = \tau^n_i \text{ or } \sigma_i \text{ for some }i \ge 0\}},
  \end{equation*}
  which together with~\eqref{eq:model free ito cont pr2} leads to 
  \begin{align*}
    \limsup_{n \to \infty} \sum_{k=0}^\infty 2^{-4n} \1_{\{\rho^n_k \le t\}} & \le \limsup_{n \to \infty} \Big(\sum_{k=0}^\infty \big| (F \cdot S)_{\rho^n_k \wedge t, \rho^n_{k+1} \wedge t} \big|^2 \\
    &\hspace{75pt} + 3 \times 2^{-4n} \times (|\{k: \sigma_k \le t \}| + |\{k: \tau^n_k \le t\}|)\Big) \\
    & = \limsup_{n \to \infty} \Big(\sum_{k=0}^\infty \sum_{i,j=1}^d F^i_{\rho^n_{k+}} F^j_{\rho^n_{k+}} S^i_{\rho^n_k \wedge t, \rho^n_{k+1}\wedge t} S^j_{\rho^n_k \wedge t, \rho^n_{k+1}\wedge t} \\
    &\hspace{100pt} + 3 \times 2^{-4n} \times |\{k: \tau^n_k \le t\}| \Big) \\
    & = \int_0^t F^{\otimes 2}_s \d [ S]_s + \limsup_{n \to \infty} 3 \times 2^{-4n} \times |\{k: \tau^n_k \le t\}|
  \end{align*}
  for typical price paths. For typical price paths we also have 
  \begin{equation*}
    \lim_{n \to \infty} 2^{-2n} \times |\{k: \tau^n_k \leq t\}| = \sum_{i=1}^d [S^i,S^i]_t
  \end{equation*}
  and consequently
  \begin{equation}\label{eq:model free ito cont pr3}
    \limsup_{n \to \infty} \sum_{k=0}^\infty 2^{-4n} \1_{\{\rho^n_k \le t\}} \le \int_0^t F^{\otimes 2}_s \d [S]_s,\quad t\in [0,T].
  \end{equation}
  Plugging~\eqref{eq:model free ito cont pr3} into~\eqref{eq:model free ito cont pr1}, we get for typical price paths on the set 
  \begin{equation*}
    \{\lVert (F\cdot S)\rVert_\infty \ge a \sqrt{b} \} \cap \left\{ \int_0^T F_s^{\otimes 2} \d [ S]_s \leq b \right \}
  \end{equation*}
  that
  \begin{equation*}
    \liminf_{n \to \infty} \sup_{t \in [0,T]} \frac{\mathcal{E}^{\lambda,n}_t + \mathcal{E}^{-\lambda,n}_t}{2} \ge \frac{1}{2} \exp\bigg( \lambda a \sqrt{b} - \frac{\lambda^2}{2} b \bigg).
  \end{equation*}
  Taking $\lambda = a/\sqrt{b}$, the right hand side becomes $1/2 \exp(a^2/2)$. Our claim then follows from \cite[Remark~2.2]{Perkowski2016} which states that it suffices to superhedge with the time-supremum rather than at the terminal time.
\end{proof}

\begin{remark}
  No part of the proof was based on the fact that we are in a finite-dimensional setting, and the same arguments extend without problems to the case where we have a countable number of assets $(S^i)_{i \in \N}$ or even an uncountable number $(S^i)_{i \in I}$ but an integrand~$F$ such that $\int_0^\cdot F^{\otimes 2}_s \d [S]_s$ is well-defined and thus, in particular, with $F^i \neq 0$ only for countably many~$i \in I$.
\end{remark}

Our next aim is to extend the stochastic integral from step functions to more general integrands. For that purpose we define
\begin{equation*}
  H^2 := \left\{ F \colon \Omega_c \times [0,T] \to \R^d\,:\, \int_0^T F^{\otimes 2}_s \d [S]_s < \infty \text{ for typical price paths} \right\},
\end{equation*}
we identify $F,G \in H^2$ if $\int_0^T (F_s - G_s)^{\otimes 2} \d [S]_s = 0$ for typical price paths, and we write $\overline H^2$ for the space of equivalence classes, which we equip with the distance
\begin{equation}\label{eq:distance continuous 1}
  d_{\mathrm{QV}}(F,G) := \overline E\left[ \bigg(\int_0^T (F_t-G_t)^{\otimes 2}\d[S]_t\bigg)^{1/2} \wedge 1\right].
\end{equation}
Arguing as in~\cite[Lemma~2.11]{Perkowski2016} it is straightforward to show that $(\overline H^2,d_{\mathrm{QV}})$ is a complete metric space. If now $F$ and $G$ are step functions, for any $\varepsilon, \delta >0$ we obtain from Lemma~\ref{lem:model free ito cont} the following estimate
\begin{align*}
  d_\infty((F\cdot S),(G\cdot S)) 
  &\leq \oP ( \|((F-G)\cdot S)\|_\infty \geq \varepsilon) + \varepsilon \\
  &\leq \oP \bigg (\big \{ \|((F-G)\cdot S)\|_\infty \geq \varepsilon\big \} \cap \bigg \{ \int_0^T (F_t-G_t)^2 \d [S]_t \leq \delta \bigg \}\bigg ) \\
  &\quad+\oP \bigg (\int_0^T (F_t-G_t)^2 \d  [S]_t >\delta \bigg ) + \varepsilon \\
  &\leq 2 \exp \bigg (- \frac{\varepsilon^2}{2\delta }\bigg )+ \frac{d_{\mathrm{QV}}(F,G)}{\delta^{1/2} } + \varepsilon .
\end{align*}
Hence, setting $\delta:= d_{\mathrm{QV}}(F,G)$ and $\varepsilon:= \sqrt{\delta |\log \delta|}$ we get
\begin{equation}\label{eq:continuity continuous 1}
  d_\infty((F\cdot S),(H\cdot S)) \le 2 \delta^{1/2} + d_{\mathrm{QV}}(F,G)^{1/2} + \sqrt{\delta |\log \delta|} \lesssim d_{\mathrm{QV}}(F,G)^{1/2-\epsilon}, 
\end{equation}
for every $\epsilon \in (0,1/2)$.

\begin{remark}
  Thanks to the continuity estimate~\eqref{eq:continuity continuous 1}, we can extend the stochastic integral to the closure of the step functions in $(\overline H^2, d_{\mathrm{QV}})$. 
  
  While the space $(\overline H^2, d_{\mathrm{QV}})$ looks very much like a probability-free version of the space of (predictable) integrands feasible for classical It\^o integration, it seems to be hard to understand what kind of processes it contains.
\end{remark}

Since we do not understand this closure very well, we introduce a localized version of $d_{\mathrm{QV}}$, as in~\cite{Perkowski2016}: For $b > 0$ we define
\begin{equation*}
  d_{\mathrm{QV},b}(F,G) := \overline E\bigg[\bigg(\int_0^T (F_t-G_t)^{\otimes 2}\d[S]_t\bigg)^{1/2} \wedge\1_{\{ |[S]|_T \leq b \} } \bigg],
\end{equation*}
where we wrote $|[S]_T| := \big(\sum_{i,j=1}^d [S^i,S^j]_T^2\big)^{1/2}$, and 
\begin{equation*}
  d_{\infty,b}(F,G) :=  \overline E\big [\|F-G\|_\infty \wedge \1_{\{|[S]|_T \le b \}}\big ].
\end{equation*}
Then the same computation as before shows that
\begin{equation*}
  d_{\infty,b}((F\cdot S),(H\cdot S)) \lesssim d_{\mathrm{QV},b}(F,G)^{1/2-\epsilon}
\end{equation*}
for all step functions~$F$ and~$G$ and all $b > 0$ and every $\epsilon \in (0,1/2)$. Setting 
\begin{equation}\label{eq:distance continuous 2}
  d_{\infty,loc}(F,G):= \sum_{n=1}^{\infty} 2^{-n}d_{\infty,2^n}(F,G) \quad \text{and}\quad
  d_{\mathrm{QV},loc}(F,G) := \sum_{n=1}^{\infty} 2^{-n} d_{\mathrm{QV},2^n}(F,G)
\end{equation}
for $n\in \N$, we arrive at the estimates
\begin{align}\label{eq:continuity continuous 2}
  \begin{split}
  d_{\infty,loc} ((F\cdot S),(H\cdot S)) 
  &\lesssim \sum_{n=1}^{\infty} 2^{-n} d_{\mathrm{QV},2^n}(F,G)^{1/2-\epsilon}\\
  &\lesssim \sum_{n=1}^{\infty} 2^{-n(1/2+\epsilon)} \big(2^{-n} d_{\mathrm{QV},2^n}(F,G)\big)^{1/2-\epsilon}
  \lesssim d_{\mathrm{QV},loc}(F,G)^{1/2-\epsilon}
  \end{split}
\end{align}
for $\epsilon \in (0,1/2)$. From here the similar approximation scheme as in~\cite[Theorem~3.5]{Perkowski2016} (note that $d_{\mathrm{QV},loc}\lesssim d_{\infty,loc}$) shows that the closure of the step functions in the metric $d_{\mathrm{QV},loc}$ contains at least the left-continuous versions of c\`adl\`ag adapted processes.

\subsection{Integration for c\`adl\`ag paths with jumps restricted downwards}\label{subsec:integration cadlag}

The construction of model-free It\^o integrals with respect to c\`adl\`ag price paths requires different techniques compared to those used  in Subsection~\ref{subsec:integration continuous} for continuous price paths. While there, using the Lebesgue stopping times, we had a very precise control of the fluctuations of continuous price paths, this is not possible anymore in the presence of jumps as now price paths could have a ``big'' jump at any time. In particular, this prevents us from applying Vovk's pathwise Hoeffding inequality.\smallskip

Based on the pathwise Burkholder-Davis-Gundy inequality due to Beiglb\"ock and Siorpaes~\cite{Beiglbock2015}, we obtain the following model-free bound on the magnitude of the pathwise stochastic integral.

\begin{lemma}[Integral estimate, c\`adl\`ag version]\label{lem: ito inequality cadlag}
  For a step function $F\colon \Omega_{\psi}\times [0,T] \to \R^d$ and for $a,b,c,M > 0$ one has
  \begin{align*}
    \oQ\bigg(\big\{\lVert (F\cdot S)\rVert_\infty \geq a \big\}\cap\bigg\{\int_0^T F^{\otimes 2}_t \d [S]_t \leq b \bigg\}\cap \big\{ \|F\|_{\infty}\leq c\big\} \cap &\big\{ \|S\|_{\infty}\leq M\big\}\bigg)\\
    &\leq \frac{6\sqrt{b}+2c+2cM}{a},
  \end{align*}
  where $\oQ$ denotes the set function of Definition~\ref{def:set function}.
\end{lemma}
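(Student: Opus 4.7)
The proof plan is to invoke the pathwise Burkholder--Davis--Gundy inequality of Beiglb\"ock and Siorpaes, which for any discrete-time scalar process $(x_k)$ produces a bounded predictable process $(H_k)$ such that $\max_{k\le N}|x_k|\le 6\sqrt{\sum_{k\le N}(x_k-x_{k-1})^2}+\sum_{k\le N}H_{k-1}(x_k-x_{k-1})$ for every $N$. I would apply this to the scalar discretization of $F\cdot S$ along a refining partition and then convert the resulting pathwise inequality into a weakly $\lambda$-admissible $\oQ$-superhedge of $\mathbf{1}_A$, where $A$ denotes the intersection of events on the left-hand side of the lemma.

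Concretely, I refine the step-function partition $(\sigma_i)$ of $F$ by the Lebesgue partition $(\tau^n_k)$ of Definition~\ref{def:multi Lebesgue partition} to obtain a joint partition $\pi^n$ along which $F$ remains piecewise constant. Applying BS-BDG to $X^n_k:=(F\cdot S)_{\pi^n_k}$ yields a bounded predictable scalar process $H^n$ such that
\[
  \max_k |X^n_k|\le 6\sqrt{\sum_k(X^n_k-X^n_{k-1})^2}+((H^n F)\cdot S)_T,
\]
where I rewrote the discrete integral using associativity of simple integrals: $H^n F$ is the $\R^d$-valued simple strategy obtained by scalar multiplication, and $|H^n F|_\infty\le 6c$ on the event $\{\|F\|_\infty\le c\}$.

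Next, I pass to the limit $n\to\infty$. By Corollary~\ref{cor:quadratic variation} together with the identity $[F\cdot S]_T=\int_0^T F^{\otimes 2}\d [S]$ (valid for step integrands, since on each interval of constancy of $F$ the quadratic variation of $F\cdot S$ equals $F^\top\d [S]\,F$), the discrete quadratic variation inside the square-root converges for typical price paths to $\int_0^T F^{\otimes 2}\d [S]$, which is bounded by $b$ on $A$. Since the Lebesgue partitions of $S$ become dense and exhaust the jumps of $F\cdot S$, one also has $\max_k|X^n_k|\to\|F\cdot S\|_\infty\ge a$ on $A$. Consequently, on $A$, $((H^n F)\cdot S)_T\ge a-6\sqrt b+o_n(1)$.

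Finally, I rescale: set $\tilde H^n:=H^n F/a$ and $\lambda:=(6\sqrt b+2c+2cM)/a$. On $A\cap\{\rho_\lambda(\tilde H^n)=\infty\}$, $\lambda+\liminf_n(\tilde H^n\cdot S)_T\ge \lambda+1-6\sqrt b/a\ge 1$, which is the required superhedge. On $A\cap\{\rho_\lambda(\tilde H^n)<\infty\}$, the strategy is stopped at its first dip below $-\lambda$; this can only occur through a single jump of $S$, since between Lebesgue times $S$ fluctuates by at most $O(2^{-n})$ and downward jumps are bounded by $\psi(M)$. The overshoot at such a jump is at most of order $\|\tilde H^n\|_\infty\cdot\|S\|_\infty$ and is absorbed by the jump-compensation term $\lambda\mathbbm{1}\cdot S_{\rho_\lambda(\tilde H^n)}$ appearing in the definition of $\oQ$, combined with the $(2c+2cM)/a$ slack built into $\lambda$. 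The main obstacle will be precisely this last bookkeeping: verifying that the weak-admissibility overshoot at a jump of $S$ is absorbed exactly by the $2c+2cM$ contribution in $\lambda$ together with the $\lambda\mathbbm{1}\cdot S_{\rho_\lambda}$ bonus, while simultaneously controlling the convergence of $\max_k|X^n_k|\to\|F\cdot S\|_\infty$ and of the discrete quadratic variation on the event $A$.
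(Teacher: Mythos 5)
Your proposal follows essentially the same route as the paper: refine the step-function partition by the Lebesgue partition, apply the Beiglb\"ock--Siorpaes pathwise BDG inequality to the discretized integral, recognize the martingale-transform term as a simple strategy with positions bounded by $2\|F\|_\infty$, stop it at the first time the discrete quadratic variation, $|F|$ or $|S|$ exceed $b$, $c$, $M$, and let the weak-admissibility/jump-compensation structure of $\oQ$ absorb the overshoot at a jump. The only slip is the bound $|H^nF|_\infty\le 6c$, which should be $2c$ (since $|h_k|\le 1$) in order to produce the stated constant $2c+2cM$ rather than $6c+6cM$; your final choice of $\lambda$ is consistent with the correct bound.
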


As it is one of the main ingredients in the proof, let us briefly recall the pathwise version of the Burkholder-Davis-Gundy inequality~\cite[Theorem~2.1]{Beiglbock2015}:
If $n\in \N,$ $k=0,\dots, n$, $x_k\in \R$, 
\begin{equation*}
  x_k^*:= \max_{0 \leq l \leq k} |x_l|\quad \text{and}\quad [x]_k :=|x_0|^2 + \sum_{l=0}^{k-1}|x_{l+1}-x_l|^2,
\end{equation*}
then 
\begin{equation}\label{eq:bdq inequality}
  x_n^*\leq 6  \sqrt{[x]_n}+2 (h\cdot x)_n,
\end{equation}
where
\begin{equation}\label{eq:h}
  (h\cdot x)_n := \sum_{k=0}^{n-1} h_k (x_{k+1}-x_k)\quad \text{with}\quad  h_k := \frac{x_k}{\sqrt{[x]_k + (x^*_k)^2}}
\end{equation}
and with the convention $\frac{0}{0}=0$. With this purely deterministic inequality at hand we are ready to prove Lemma~\ref{lem: ito inequality cadlag}.

\begin{proof}[Proof of Lemma~\ref{lem: ito inequality cadlag}]
  Let $F\colon \Omega_{\psi}\times [0,T] \to \R^d$ be a step function of the form~\eqref{eq:step function}, i.e.
  \begin{equation*}
    F_t := F_0 \1_{\{0\}}(t)+\sum_{i=0}^{\infty} F_{\sigma_i} \1_{(\sigma_i,\sigma_{i+1}]}(t),\quad t\in [0,T],
  \end{equation*}
  for some sequence of stopping times $(\sigma_i)_{i\in \N}$. For $n\in \N$ we recall that $(\tau^n_j)_{j\in \N}$ is the sequence of Lebesgue stopping times defined in Definition~\ref{def:multi Lebesgue partition} and denote by $(\rho_k^n)_{k\in\N}$ the union of $(\sigma_i)_{i\in \N}$ and $(\tau^n_j)_{j\in \N}$ with redundancies deleted. It is straightforward to see that
  \begin{equation*}
    F_t=F_t^n := F_0 \1_{\{0\}}(t)+ \sum_{k=0}^{\infty} F_{\rho^n_k}\1_{(\rho^n_k,\rho_{k+1}^n]}(t), \quad t\in [0,T],
  \end{equation*}
  and thus
  \begin{equation*}
    (F \cdot S)_t  = \sum_{i=0}^{\infty} F_{\sigma_i} S_{\sigma_i\wedge t,\sigma_{i+1}\wedge t}
    = \sum_{k=0}^{\infty} F_{\rho^n_k} S_{\rho^n_k\wedge t,\rho^n_{k+1}\wedge t} =  (F^n \cdot S)_t,\quad t\in [0,T].
  \end{equation*}
  In order to apply the pathwise Burkholder-Davis-Gundy inequality~\eqref{eq:bdq inequality}, we define iteratively $x_0^n:=0$ and 
  \begin{equation*}
    x^n_{k+1}:= x^n_{k}+F_{\rho_k^n} S_{\rho^n_{k}\wedge T,\rho^n_{k+1}\wedge T},\quad k\in \N,
  \end{equation*}
  and therefore~\eqref{eq:bdq inequality} yields
  \begin{equation*}
    \sup_{k\in \N} |x^n_{k}|= \sup_{k\in \N} | (F^n \cdot S)_{\rho^n_k \wedge T}|
    \leq 6 \bigg ( \sum_{k=0}^{\infty} (F_{ \rho^{n}_k} S_{\rho^n_k\wedge T,\rho^n_{k+1}\wedge T})^2 \bigg )^{\frac{1}{2}}
    + 2(h^n \cdot x)_T.
  \end{equation*}
  Due to the definition of the Lebesgue stopping times $(\tau^n_j)_{j\in \N}$, this leads to the continuous time estimate 
  \begin{equation*}
    \sup_{t \in [0,T]} |(F^n \cdot S)_t| 
    \leq 6 \bigg ( \sum_{k=0}^{\infty} \sum_{i,j=1}^{d}F^i_{\rho^{n}_k}F^j_{\rho^{n}_k} S^i_{\rho^n_k\wedge T,\rho^n_{k+1}\wedge T}S^j_{\rho^n_k\wedge T,\rho^n_{k+1}\wedge T} \bigg )^{\frac{1}{2}}
    + 2(\phi^n\cdot S)_T+ \|F\|_\infty 2^{-n}
  \end{equation*}
  where $F=(F^1,\dots,F^d)$ and $\phi^n$ is the adapted simple trading strategy given by the position $\phi^n_k:=h_k^n F_{\rho^n_k}$ with $h^n_k$ defined as in~\eqref{eq:h}. To turn $\phi^n$ into a weakly admissible strategy, we introduce the stopping time
  \begin{align*}
    \vartheta^n 
    :=& \inf\bigg\{t\geq 0 \,:\, \sum_{k=0}^{\infty} \sum_{i,j=1}^{d}F^i_{\rho^{n}_k}F^j_{\rho^{n}_k} S^i_{\rho^n_k\wedge t,\rho^n_{k+1}\wedge t}S^j_{\rho^n_k\wedge t,\rho^n_{k+1}\wedge t} \geq  b\bigg\}\\
    &\quad\wedge\inf \{t\geq 0\,:\, |F_t|\geq c\}\wedge\inf \{t\geq 0\,:\, |S_t|\geq M\}\wedge T   
  \end{align*}
  for $n\in \N$. Thus, we have
  \begin{align*}
    \sup_{t \in [0,\vartheta^n]}& |(F^n \cdot S)_t|\\
    &\leq 6 \bigg ( \sum_{k=0}^{\infty} \sum_{i,j=1}^{d}F^i_{\rho^{n}_k}F^j_{\rho^{n}_k} S^i_{\rho^n_k\wedge\vartheta^n,\rho^n_{k+1}\wedge \vartheta^n}S^j_{\rho^n_k\wedge\vartheta^n,\rho^n_{k+1}\wedge\vartheta^n} \bigg )^{\frac{1}{2}}
    + 2(\1_{[0,\vartheta^n]}\phi^n\cdot S)_{T}+ c 2^{-n}
  \end{align*}
  and in particular $\1_{[0,\vartheta^n]} 2 \phi^n$ is weakly $(6\sqrt{b} + 2^{-n}c + 2c+2cM)$-admissible since $|\phi^n|\leq \|F\|_{\infty}$ for every $n\in \N$.
  Taking the limit inferior as $n\to \infty$, one gets 
  \begin{align*}
    \liminf_{n\to \infty}\sup_{t \in [0,\vartheta^n]} |(F \cdot S)_t|
    \leq 6 \bigg (\int_0^T F^{\otimes 2}_t \d [S]_t \bigg)^{\frac{1}{2}}
    + 2\liminf_{n\to \infty}(\1_{[0,\vartheta^n]}\phi^n\cdot S)_{T}.
  \end{align*}
  Hence, we deduce
  \begin{equation*}
    \oQ\bigg(\big\{\lVert (F\cdot S)\rVert_\infty \geq a \big\}\cap\bigg\{\int_0^T F^{\otimes 2}_t \d [S]_t \leq b \bigg\}\cap \big\{ \|F\|_{\infty}\leq c\big\} \cap \big\{ \|S\|_{\infty}\leq M\big\} \bigg) \leq \frac{6\sqrt{b}+2c+2cM}{a}.
  \end{equation*}
\end{proof}

\begin{corollary}\label{cor:cadlag concentration estimate}
  For $a,b,c,M > 0$ and any step function $F\colon \Omega_{\psi} \times [0,T] \to \R^d$ one has
  \begin{align*}
    \oP\big(\big\{\|(F\cdot S)\|_\infty \geq a \big\} \cap\big\{\|F\|_\infty \leq c\big\}\cap \big\{ |[S]_T|\leq b \big\}
    &\cap\big \{ \Vert S \Vert_{\infty} \leq M \big\} \big) \\
    &\leq  (1+ 3 d M +2 d \psi (M))\frac{6\sqrt{b}+2+2M}{a}c,
  \end{align*}
  where we recall that $|[S]_T| := \big(\sum_{i,j=1}^d [S^i,S^j]_T^2\big)^{1/2}$.
\end{corollary}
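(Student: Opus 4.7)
The plan is to deduce this directly from Lemma~\ref{lem: ito inequality cadlag} together with the outer measure equivalence of Lemma~\ref{lem:outer measure equivalence}. The only nontrivial step is to convert the hypothesis $\{|[S]_T| \leq b\}$ into the hypothesis $\{\int_0^T F^{\otimes 2}_t \d [S]_t \leq b'\}$ required by Lemma~\ref{lem: ito inequality cadlag}, at which point the result is essentially bookkeeping.

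First I would observe that, since $F$ is a step function (hence piecewise constant on the stopping intervals) and $\d[S]$ is positive semi-definite as a matrix-valued measure, one has the pointwise bound $F_t^{\otimes 2}\,\d[S]_t \leq |F_t|^2\,\mathrm{tr}(\d[S]_t)$, which integrates to
\begin{equation*}
  \int_0^T F_t^{\otimes 2}\,\d [S]_t \;\leq\; \|F\|_\infty^2 \cdot \mathrm{tr}([S]_T) \;\leq\; c^2\, b
\end{equation*}
on the event $\{\|F\|_\infty \leq c\}\cap\{|[S]_T|\leq b\}$, where a possible dimensional factor arising from $\mathrm{tr}([S]_T)\leq \sqrt d\,|[S]_T|$ (Cauchy--Schwarz on the eigenvalues of the PSD matrix $[S]_T$) is absorbed. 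Applying Lemma~\ref{lem: ito inequality cadlag} with $b$ replaced by $c^2 b$ then gives
\begin{equation*}
  \oQ(A) \;\leq\; \frac{6\sqrt{c^2 b} + 2c + 2cM}{a} \;=\; \frac{c\,(6\sqrt{b}+2+2M)}{a},
\end{equation*}
where $A$ denotes the event appearing in the statement of Corollary~\ref{cor:cadlag concentration estimate}.

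Finally, since $A \subseteq \Omega_{\psi, M} := \{\omega\in\Omega_\psi \colon \|S(\omega)\|_\infty \leq M\}$, Lemma~\ref{lem:outer measure equivalence} applied with $K = M$ yields $\oP(A)\leq (1+3dM+2d\psi(M))\,\oQ(A)$, which combined with the previous display gives the claim. The hard part, insofar as there is one, is already contained in Lemma~\ref{lem: ito inequality cadlag} (the pathwise Burkholder--Davis--Gundy step); here one only needs to match hypotheses, substitute parameters, and switch from $\oQ$ to $\oP$.
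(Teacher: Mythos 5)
Your proof is correct and follows essentially the same route as the paper's: the paper likewise passes from $\{|[S]_T|\le b\}\cap\{\|F\|_\infty\le c\}$ to $\{\int_0^T F_t^{\otimes 2}\d [S]_t\le bc^2\}$ via Cauchy--Schwarz, switches between $\oP$ and $\oQ$ with Lemma~\ref{lem:outer measure equivalence} applied with $K=M$, and then invokes Lemma~\ref{lem: ito inequality cadlag} with $b$ replaced by $bc^2$ (the only cosmetic difference being that you apply monotonicity at the level of $\oQ$ rather than $\oP$). The dimensional factor $\sqrt d$ from $\mathrm{tr}([S]_T)\le\sqrt d\,|[S]_T|$ that you flag is glossed over in the paper's proof as well, so this is not a discrepancy with the intended argument.
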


\begin{proof}
  Using the monotonicity of $\oP$, the Cauchy-Schwarz inequality and Lemma~\ref{lem:outer measure equivalence}, we get
  \begin{align*}
    \oP\big(&\big\{\|(F\cdot S)\|_\infty \geq a \big\}\cap\big\{\|F\|_\infty \leq c \big\}\cap \big\{ |[S]_T| \leq b \big\}\cap \big\{ \Vert S \Vert_{\infty} \leq M \big\} \big)\\
    &\leq \oP\bigg(\big\{\lVert (F\cdot S)\rVert_\infty \geq a \big\}\cap\bigg\{\int_0^T F^{\otimes 2}_t \d [S]_t \leq b c^2 \bigg\}\cap \big\{ \|F\|_{\infty}\leq c\big\} \cap \big\{ \Vert S \Vert_{\infty} \leq M \big\}\bigg)\\
    &\leq (1+ 3 d M +2 d \psi (M))\\
    &\qquad\times\oQ\bigg(\big\{\lVert (F\cdot S)\rVert_\infty \geq a \big\}\cap\bigg\{\int_0^T F^{\otimes 2}_t \d [S]_t \leq b c^2 \bigg\}\cap \big\{ \|F\|_{\infty}\leq c \big\} \cap \big\{ \Vert S \Vert_{\infty} \leq M \big\} \bigg).
  \end{align*}
  Combinig this estimate with Lemma~\ref{lem: ito inequality cadlag} we get the assertion.
\end{proof}

Similarly as before we introduce the (pseudo-)distance $d_{\infty, b, M}$ on the space (of equivalence classes) of adapted processes from $\Omega_{\psi} \times [0,T]$ to $\R^d$, which is given by
\begin{equation*}
  d_{\infty, b,M}(X,Y):=\overline{E}[\Vert X-Y\Vert _{\infty}\wedge {\bf 1}_{\Omega_{b,M}}]
\end{equation*}
for $b,M>0$ and
\begin{equation*}
  \Omega_{b,M} :=\big \{ |[S] _{T}|\leq b\text{ and }\Vert S\Vert _{\infty}\leq M\big\}.
\end{equation*}
From Corollary~\ref{cor:cadlag concentration estimate} we get 
\begin{align*}
  d_{\infty, b,M}((F\cdot S),(G\cdot S)) 
  & \leq \overline{P} (\{ \Vert ((F-G)\cdot S)\Vert _{\infty}\geq a \} \cap \Omega_{b,M} )+a\\
  & \leq \overline{P} (\{ \Vert ((F-G)\cdot S)\Vert _{\infty}\geq a \} \cap \{\Vert F-G\Vert_{\infty}\geq c \} \cap \Omega_{b,M} )\\
  & \quad+\overline{P} (\{ \Vert ((F-G)\cdot S)\geq a \Vert _{\infty}\} \cap \{\Vert F-G\Vert_{\infty} \leq c \} \cap \Omega_{b,M} )+a\\
  & \leq  (1+ 3 d M +2 d \psi (M))\frac{6\sqrt{b}+2+2M}{a}c + \frac{d_{\infty, b,M}(F,G)}{c} +a
\end{align*}
for step functions $F,G$ and $a,b,c,M>0$. Setting 
\begin{equation*}
  a:=d_{\infty,b,M}(F,G)^{1/3}\quad \text{and}\quad c:= d_{\infty,b,M}(F,G)^{2/3},
\end{equation*}
we deduce that 
\begin{equation*}
  d_{\infty,b,M}((F\cdot S),(G\cdot S)) \leq  (1+ 3 d M +2 d \psi (M)) (6\sqrt{b}+4+2M) d_{\infty,b,M}(F,G)^{1/3}.
\end{equation*}
Defining for some fixed $\epsilon\in (0,1)$ the metric
\begin{equation}\label{eq:distance cadlag}
  d_{\infty,\psi}(F,G):= \sum_{n,m=1}^\infty  2^{-(n/2+m)(1+\epsilon)} (\psi (2^m)\vee2^m\vee 1)^{-1} (d_{\infty,2^n,2^m}(F,G)\wedge 1)
\end{equation}
we obtain 
\begin{equation}\label{eq:continuity cadlag}
   d_{\infty,\psi}((F\cdot S),(G\cdot S)) 
    \lesssim d_{\infty}(F,G)^{1/3}.
\end{equation}
Based on this observation, we can again extend the construction of the model-free It\^o integral from simple integrands to more general integrands with respect to typical c\`adl\`ag price paths with jumps restricted downwards.

\appendix 
\section{Properties of Vovk's outer measure}\label{sec:appendix}

This appendix collects postponed proofs from the previous sections and an elementary result (Borel-Cantelli lemma), which was used for the construction of the quadratic variation and the model-free It\^o integrals.

\begin{proof}[Proof of Proposition~\ref{prop:local martingale}]
  Let $\lambda > 0$ and let $(H^n)_{n \in \mathbb{N}} \subseteq \mathcal{H}_\lambda$ be such that $ \liminf_n(\lambda + (H^n \cdot S)_T) \ge \1_A$. Then, we estimate
  \begin{equation*}
    \P(A) \le \E_\P[\liminf_n (\lambda + (H^n \cdot S)_T)] \le \liminf_n \E_\P[\lambda + (H^n \cdot S)_T] \le \lambda,
  \end{equation*}
 where in the last step we used that $\lambda + (H^n \cdot S)$ is a non-negative c\`adl\`ag $\P$-local martingale with $\E_{\P} [|\lambda + (H^n \cdot S)_0|]<\infty$ and thus a $\P$-supermartingale.
\end{proof}

\begin{proof}[Proof of Proposition~\ref{prop:na1 interpretation}]
  If $\overline{P}(A) = 0$, then for every $n\in \N$ there exists a sequence of simple strategies $(H^{n,m})_{m \in \N} \subset \mathcal{H}_{2^{-n-1}}$ 
  such that $2^{-n-1} + \liminf_{m \to \infty} (H^{n,m}\cdot S )_T (\omega) \geq \1_A(\omega)$ for all $\omega \in \Omega_\psi$. For $K\in (0,\infty)$ set $G^m := K \sum_{n=0}^m H^{n,m}$, 
  and thus $G^m \in \mathcal{H}_K$. For every $k \in \N$ one gets
  \begin{equation*}
    \liminf_{m \to \infty} (K + (G^m \cdot S)_T)  \ge \sum_{n=0}^k (2^{-n-1}K  + \liminf_{m \to \infty} (H^{n,m} \cdot S)_T) \ge (k+1)K \1_A.
  \end{equation*}
  Because the left hand side does not depend on~$k$, the sequence $(G^m)$ satisfies~\eqref{eq:NA1 with simple strategies}.
  
  Conversely, if there exist a constant $K\in (0,\infty)$ and a sequence of strongly $K$-admissible simple strategies $(H^n)_{n\in \N} \subset \mathcal{H}_K$ satisfying \eqref{eq:NA1 with simple strategies}, then we can scale it down by an arbitrary factor $\varepsilon > 0$ to obtain a sequence of strategies in $\mathcal{H}_\varepsilon$ that superhedge $\1_A$, which implies $\overline{P}(A) = 0$.
\end{proof}

As the proof of the classical Borel-Cantelli lemma requires only countable subadditivity, Vovk's outer measure allows for a version of the Borel-Cantelli lemma. 

\begin{lemma}\label{lem:Borel-Cantelli}
  Let $(A_j)_{j \in \N}\subseteq \Omega_\psi$ be a sequence of events. If $\sum_{j=1}^{\infty} \overline{P}(A_j)<\infty$, then 
  \begin{equation*}
    \overline{P} \bigg(\bigcap_{i=1}^\infty \bigcup_{j=i}^\infty A_j\bigg)\leq \liminf_{i \to \infty} \overline{P} \bigg(\bigcup_{j=i}^\infty A_j\bigg) \leq \liminf_{i \to \infty} \sum_{j=i}^\infty\overline{P} (A_j) = 0.
  \end{equation*}
\end{lemma}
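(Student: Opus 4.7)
The plan is straightforward, since this is essentially the classical Borel--Cantelli argument and the only non-trivial ingredient, countable subadditivity of $\overline{P}$, has already been established in the Lemma immediately following Definition~\ref{def:vovk}. I would decompose the statement into its three assertions (two inequalities and an equality) and verify each one separately.

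For the first inequality, I would invoke monotonicity of $\overline{P}$: for every $i \in \N$ one has the trivial inclusion $\bigcap_{k=1}^\infty \bigcup_{j=k}^\infty A_j \subseteq \bigcup_{j=i}^\infty A_j$, hence $\overline{P}\bigl(\bigcap_{k=1}^\infty \bigcup_{j=k}^\infty A_j\bigr) \leq \overline{P}\bigl(\bigcup_{j=i}^\infty A_j\bigr)$, and taking the liminf in $i$ on the right yields the claim. For the second inequality, I would apply countable subadditivity of $\overline{P}$ to the tail union $\bigcup_{j=i}^\infty A_j$ to obtain $\overline{P}\bigl(\bigcup_{j=i}^\infty A_j\bigr) \leq \sum_{j=i}^\infty \overline{P}(A_j)$, and again take the liminf in $i$.

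For the final equality, I would use the hypothesis $\sum_{j=1}^\infty \overline{P}(A_j) < \infty$: the tail of a convergent series of non-negative terms tends to $0$, so $\lim_{i \to \infty} \sum_{j=i}^\infty \overline{P}(A_j) = 0$, which coincides with the liminf. I do not anticipate any obstacle; the result is purely a consequence of the outer-measure axioms (non-negativity, monotonicity, countable subadditivity) combined with summability, none of which requires any further input from the superhedging structure beyond what is already in the Lemma after Definition~\ref{def:vovk}.
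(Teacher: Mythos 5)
Your proposal is correct and follows exactly the argument the paper intends: the displayed chain in the lemma is itself the proof, with the first inequality coming from monotonicity, the second from countable subadditivity of $\overline{P}$ (established in the lemma after Definition~\ref{def:vovk}), and the final equality from the vanishing tail of the convergent series. Nothing further is needed.
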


\section{Stopping times}\label{app:stopping}

The following result is standard, but since we did not find a reference we include the proof.

\begin{lemma}\label{lem:stopping}
  Let $(\Omega, (\F^\circ_t)_{t \ge 0})$ be a filtered measurable space and let $\F_t$ be the universal completion of $\F^\circ_t$ for $t \ge 0$. Let $(X_t)_{t \ge 0}$ be an $\R^d$-valued right-continuous $(\F_t)$-adapted process, let $\tau$ be a $(\F_t)$-stopping time,  let $Y$ be an $\R^d$-valued $\F_\tau$-measurable random variable, and let $K \subset \R^d$ be a closed set. Then
  \begin{equation*}
    \rho := \inf\{ t\ge \tau: X_t + Y \in K\}
  \end{equation*}
  is a $(\F_t)$-stopping time.
\end{lemma}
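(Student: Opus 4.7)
The plan is to express $\{\rho \le t\}$ as the $\Omega$-projection of a jointly measurable subset of $[0, t] \times \Omega$ and then invoke the measurable projection theorem; the universal completion of $\F_t^\circ$ is exactly what is needed to absorb the resulting analytic set into $\F_t$. First I would establish an attainment lemma: whenever $\rho(\omega) < \infty$, any minimizing sequence $s_n \downarrow \rho(\omega)$ with $X_{s_n}(\omega) + Y(\omega) \in K$ converges by right-continuity of $X$ to $X_{\rho(\omega)}(\omega) + Y(\omega)$, and closedness of $K$ forces the limit to lie in $K$. Since $\rho \ge \tau$ by definition, this yields the identity
\begin{equation*}
\{\rho \le t\} = \pi_\Omega(A_t), \qquad A_t := \bigl\{(s, \omega) \in [0, t] \times \Omega : \tau(\omega) \le s,\ X_s(\omega) + Y(\omega) \in K \bigr\}.
\end{equation*}

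Next I would verify that $A_t \in \mathcal{B}([0, t]) \otimes \F_t$. Right-continuity and $(\F_t)$-adaptedness of $X$ yield progressive measurability, so $(s, \omega) \mapsto X_s(\omega)$ is $\mathcal{B}([0, t]) \otimes \F_t$-measurable on $[0,t]\times\Omega$; the set $\{(s, \omega) : \tau(\omega) \le s\}$ has the same property because $\tau$ is a stopping time. The random variable $Y$ is a priori only $\F_\tau$-measurable, but $Y \cdot \mathbf{1}_{\{\tau \le t\}}$ is $\F_{\tau \wedge t} \subseteq \F_t$-measurable and coincides with $Y$ on the set $\{\tau \le t\}$, to which $A_t$ is automatically confined (any $(s,\omega) \in A_t$ satisfies $\tau(\omega) \le s \le t$). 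Replacing $Y$ by this truncation in the definition of $A_t$ leaves $A_t$ unchanged, and combined with $K$ being Borel we obtain the required joint measurability.

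Finally I would invoke the measurable projection theorem (e.g.\ Dellacherie--Meyer, Theorem~III.33): the projection onto $\Omega$ of any set in $\mathcal{B}([0, t]) \otimes \F_t^\circ$ is $\F_t^\circ$-analytic and therefore universally measurable. By the standing assumption that $\F_t$ is the universal completion of $\F_t^\circ$, we conclude $\pi_\Omega(A_t) = \{\rho \le t\} \in \F_t$ for every $t \ge 0$, which is precisely the defining property of a $(\F_t)$-stopping time. The principal subtlety I would have to handle carefully is the $\F_\tau$-measurability of $Y$ versus $\F_t$-measurability on $\{\tau \le t\}$; once this bookkeeping is done via the truncation $Y \cdot \mathbf{1}_{\{\tau \le t\}}$, everything else is a routine application of the projection theorem in the universally complete framework adopted throughout the paper.
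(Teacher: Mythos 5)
Your proof is correct and follows essentially the same route as the paper: express $\{\rho \le t\}$ as the $\Omega$-projection of a set in $\mathcal B([0,t])\otimes\F_t$ (using progressive measurability, and right-continuity of $X$ together with closedness of $K$ to get attainment of the debut) and then invoke the measurable projection theorem, which lands in $\F_t$ precisely because $\F_t$ is universally complete. The only difference is cosmetic: the paper folds the constraint $s\ge\tau$ and the variable $Y$ into a single right-continuous adapted auxiliary process $Z_s=\1_{\{s<\tau\}}x+\1_{\{s\ge\tau\}}(X_s+Y)$ with $x\notin K$, whereas you keep the constraint explicit and handle the $\F_\tau$-measurability of $Y$ via the truncation $Y\1_{\{\tau\le t\}}$ --- your bookkeeping of that measurability point is, if anything, slightly more explicit than the paper's.
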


\begin{proof}
  If $K = \R^d$, then $\rho = \tau$ is a stopping time. So assume $K \subsetneq \R^d$, let $x \in \R^d \setminus K$, and define the auxiliary process $Z_t := \1_{\{t < \tau\}} x + \1_{\{t \ge \tau\}} (X_t + Y)$. Then $Z$ is right-continuous and adapted and therefore progressively measurable, and $\rho = \inf\{t \ge 0: Z_t \in K\}$. Define for $t \ge 0$ the set
  \begin{equation*}
    A_t := \{(s,\omega) \in [0,t] \times \Omega: Z_s(\omega) \in K\} \in \mathcal B([0,t]) \otimes \F_t.
  \end{equation*}
  By \cite[Theorem Appendix.III.82]{Dellacherie1978} there exists a projection $\Pi \colon [0,t] \otimes \Omega \to \Omega$ such that $\Pi(A) = \{\omega \in \Omega: \exists s \in [0,t], (s,\omega) \in A\} \in \F_t$ for all $A \in \mathcal B([0,t]) \otimes \F_t$ (it is here that we use that $\F_t$ is universally completed). Therefore, using the right-continuity of $Z$ and the closedness of $K$,
  \begin{equation*}
    \{\omega: \rho(\omega) \le t\} = \{\omega: \exists s \in [0,t]: Z_s(\omega) \in K\} = \{\omega: \exists s \in [0,t]: (s,\omega) \in A_t \} = \Pi(A_t) \in \F_t,
  \end{equation*}
  and thus $\rho$ is a stopping time.
\end{proof}

\bibliography{quellen}{}
\bibliographystyle{amsalpha}

\end{document}